\numberwithin{equation}{section}
\newtheorem{prop}{Proposition}
\newtheorem{defi}{Definition}
\newtheorem{remark}{Remark}
\newcommand{\prob}{\mathbb{P}}  
\newcommand{\tribu}{\mathcal{F}} 
\newcommand{\R}{\mathbb{R}}  
\newcommand{\domain}{\mathbb{X}}
\newcommand{\Esp}{\mathbb{E}}  
\newcommand{\Cov}{\mathbb{C}\mathrm{ov}}   
\newcommand{\Var}{\mathbb{V}\mathrm{ar}}  
\newcommand{\indexSet}{I}   
\newcommand{\exceptIndexSet}{- \indexSet}  
\newcommand{\Tr}[1]{\mathrm{Tr}\left[ #1 \right]}
\newcommand{\nvar}{d}   
\newcommand{\nOutputDimensions}{L}  
\newcommand{\indPoints}{i}
\newcommand{\nDoE}{n}  
\newcommand{\DoEpoint}[1]{x^{(#1)}}  
\newcommand{\DoEset}{\{\DoEpoint{1}, \dots, \DoEpoint{\nDoE} \}} 
\newcommand{\SobolNormalized}{S}  
\newcommand{\SobolUnnormalized}{D}  
\newcommand{\sobolTot}[1]{\overline{\SobolNormalized}_{#1}}
\newcommand{\sobolTotD}[1]{\overline{\SobolUnnormalized}_{#1}}
\newcommand{\sobolTotDmatrix}[1]{\overline{\mathbf{\SobolUnnormalized}}_{#1}}
\newcommand{\sobolClosed}[1]{\underline{\SobolNormalized}_{#1}}  
\newcommand{\sobolClosedD}[1]{\underline{\SobolUnnormalized}_{#1}}  
\newcommand{\sobolClosedDmatrix}[1]{\underline{\mathbf{\SobolUnnormalized}}_{#1}}
\newcommand{\GSI}{\underline{\mathrm{GSI}}}
\newcommand{\TotalGSI}[1]{\overline{\mathrm{GSI}}_{#1}}
\newcommand{\indOutputs}{\ell}
\newcommand{\inputVectorRandom}{X}  
\newcommand{\inputVectorDet}{x} 
\newcommand{\outputVectorRandom}{Y}  
\newcommand{\probDist}{\mu}    
\newcommand{\deterministicModel}{f}  
\newcommand{\inputPoint}{x}  
\newcommand{\DoE}{\mathcal{X}}   
\newcommand{\outputScalarDeterministic}{y}  
\newcommand{\outputVectorDeterministic}{\outputScalarDeterministic}  
\newcommand{\outputHighDimensional}{\outputScalarDeterministic_{\indOutputs}} 
\newcommand{\indPF}{k}
\newcommand{\PFSampleA}{\inputVectorRandom}  
\newcommand{\PFPointA}{\inputVectorRandom} 
\newcommand{\PFSampleB}{\inputVectorRandom'} 
\newcommand{\PFPointB}[1]{\inputVectorRandom^{* {#1}}} 
\newcommand{\pfEstim}[1]{{\widehat{#1}}^{\mathrm{pf}}}  
\newcommand{\pfEstimText}[2]{{\widehat{#1}}^{\mathrm{pf, #2}}}  
\newcommand{\nPF}{N}  
\newcommand{\pickFreezePoint}{x_{\indPF}}  
\newcommand{\pickFreezePointTilde}{\tilde{x}_{\indPF}} 
\newcommand{\pickFreezeVector}{x} 
\newcommand{\pickFreezeVectorTilde}{\tilde{\pickFreezeVector}} 
\newcommand{\PFoutput}[1]{\outputVectorRandom^{#1}} 
\newcommand{\PFoutputVec}[1]{\outputVectorRandom^{#1}} 
\newcommand{\PFoutputStar}[1]{\outputVectorRandom^{\indexSet, #1}}
\newcommand{\PFoutputStarTotal}[1]{\outputVectorRandom^{\exceptIndexSet, #1}}
\newcommand{\PFoutputStarVec}[1]{\outputVectorRandom^{\indexSet, #1}}
\newcommand{\CovMatrix}{\mathbf{D}}
\newcommand{\indGPR}{j}
\newcommand{\GaussianProcess}{Z} 
\newcommand{\MeanFunction}{m}
\newcommand{\KernelFunction}{K}
\newcommand{\condGP}{\GaussianProcess_n}
\newcommand{\ConditionalMean}{\MeanFunction_c}
\newcommand{\ConditionalKernel}{\KernelFunction_c}
\newcommand{\condGPVector}{\GaussianProcess_n}
\newcommand{\ObservationVector}{\outputVectorDeterministic}
\newcommand{\KernelMatrix}{\KernelFunction(\DoE, \DoE)}
\newcommand{\KernelVector}[1]{\KernelFunction(\DoE, #1)}
\newcommand{\KernelVectorRow}[1]{\KernelFunction(#1, \DoE)}
\newcommand{\NumberGPite}{N_{Z}}
\newcommand{\nX}{N_X}
\newcommand{\indBoot}{b}
\newcommand{\nBoot}{N_X}
\newcommand{\indBasis}{q}
\newcommand{\coef}{c}   
\newcommand{\basis}{v}  
\newcommand{\basisVec}{\basis_{., \indOutputs}}   
\newcommand{\basisMatrix}{\mathbf{V}}
\newcommand{\nbasis}{n_{b}}  
\newcommand{\coefVector}{\coef}  
\newcommand{\GramMatrix}{\mathbf{G}}
\begin{document}

\title{Estimation and model errors in Gaussian-process-based Sensitivity Analysis of functional outputs}

\author[1,2]{Yuri Taglieri Sáo}
\author[1]{Olivier Roustant}
\author[2]{Geraldo de Freitas Maciel}

\affil[1]{Institut de Mathématiques de Toulouse, Université de Toulouse, INSA\\
135, Avenue de Rangueil, 31077 Toulouse, Occitanie, France}

\affil[2]{Engineering College of Ilha Solteira, Civil Engineering Department, São Paulo State University
``Júlio de Mesquita Filho'' (UNESP)\\
Av. Brasil, 56, 15385-000 Ilha Solteira, São Paulo, Brazil}

\date{\today}

\maketitle

\begin{abstract}
Global sensitivity analysis (GSA) of functional-output models is usually performed by combining statistical techniques, such as basis expansions, metamodeling and sampling-based estimation of sensitivity indices. By neglecting truncation error from basis expansion, two main sources of errors propagate to the final sensitivity indices: the metamodeling-related error and the sampling-based, or \textit{pick-freeze} (PF), estimation error. This work provides an efficient algorithm to estimate these errors in the frame of Gaussian processes (GP), based on the approach of \citet{legratiet2014}.
The proposed algorithm takes advantage of the fact that the number of basis coefficients of expanded model outputs is significantly smaller than output dimensions. Basis coefficients are fitted by GP models and multiple conditional GP trajectories are sampled. Then, vector-valued PF estimation is used to speed-up the estimation of Sobol' indices and generalized sensitivity indices (GSI).
We illustrate the methodology on an analytical test case and on an application in non-Newtonian hydraulics, modelling an idealized dam-break flow. Numerical tests show an improvement of $15$ times in the computational time when compared to the application of \citet{legratiet2014} algorithm separately over each output dimension.
\end{abstract}

\bigskip
\noindent\textbf{Keywords:} Global sensitivity analysis; functional outputs; Gaussian process; pick-freeze estimation

\section{Introduction}

Global sensitivity analysis (GSA) is a powerful set of techniques to estimate the influence of input variables over the variance of model outputs through sensitivity indices \citep{Saltelli2002, Iooss2015, DaVeiga2021}. 
For most practical applications the analytical calculation of these indices is not possible, except for a few simple expressions or specific classes of models, e.g. polynomial chaos expansion (PCE) \citep{Sudret2008}. To address this issue, Monte Carlo or quasi Monte Carlo methods are used to estimate numerically the sensitivity indices. Here, we focus on the \textit{pick-freeze} (PF) methods \citep{Sobol1993, Jansen1999, Saltelli2008, Janon2014, Gamboa2016}. The accurate computation of sensitivity indices through PF methods are computationally expensive, since they require a large number of model outputs (order of thousands) to be evaluated. In this case, one may replace the original true model by a fast-to-evaluate metamodel, which is an approximation of the original model, allowing the application of PF methods \citep{Iooss2015, DaVeiga2021}. Some examples of types of metamodels are the Gaussian process regression \citep{Williams1995}, polynomial chaos expansion \citep{Sudret2008}, artificial neural networks \citep{zou2008}, among others.

In many cases, models produce functional outputs and the evaluation of sensitivity indices in this functional scenario may be relevant. Usually, a basis-expansion technique is employed in order to reduce the dimensionality of outputs, by expanding output data into a functional basis composed of basis coefficients and basis components \citep{Marrel2011, Nagel2020, Perrin2021}. The few most important basis components are selected and the respective coefficients are used to construct metamodels. Then, vector-valued PF estimation is employed to compute sensitivity indices of basis coefficients, which are used to compute sensitivity maps (SMs) of Sobol' indices \citep{Li2020, Jung2023, Sao2025}. Additionally, generalized sensitivity indices (GSI) can be estimated to summarize the global contribution of a given input variable over the entire functional domain \citep{Lamboni2011, Perrin2021}. For the continuity of this work, we refer to this combination of techniques, i.e. basis expansions, metamodeling and PF methods, as \textit{functional GSA}.

The application of functional GSA presents three sources of error that affect the final sensitivity indices: (i) the basis-expansion truncation error, (ii) the PF estimation error; and (iii) the metamodeling error. The first one is neglected in this work, since the application of basis-expansion technique assumes a near-perfect representation of output data; \citet{Li2020} presents a remark about propagating basis-expansion truncation errors into Sobol' indices. The second one introduces error due to the numerical estimation of Sobol' indices and a common approach to estimate this error is the bootstraping approach \citep{Archer1997}. The third one is caused by the approximation of the true model with surrogate models trained on limited datasets. 
Notice that we assume that the model is well-specified. 
Some strategies can be used, such as bootstraping the initial design of experiments (DoE) \citep{Storlie2009, Rohmer2016, Palar2023} or, if available, using error bounds from metamodels, e.g. the variance of Gaussian process regression (GPR) \citep{Janon2014_2}. A robust methodology was proposed by \citet{legratiet2014} through an algorithm, where a set of random trajectories of a GPR surrogate model was subjected to bootstraping and allowed to obtain overall and metamodel-only uncertainties. We note that the methodology is initially developed for scalar outputs and we refer to \citet{Wirthl2023, Gauchy2024, Liu2024} as examples of applications of this methodology.

In the functional-GSA context, \citet{legratiet2014} methodology can be applied straightforwardly on each scalar output dimension (e.g. \citep{zhao2021, ye2022}). However, functional outputs are high-dimensional in practice and the computational cost, which is high due to sampling multiple GP trajectories and bootstraping, is scaled significantly if each output dimension is treated separately. Since functional GSA uses a basis expansion technique, the error estimation can be computed more efficiently by taking advantage of the basis components and coefficients (\textit{basis-derived} approach), instead of computing it dimension-by-dimension (\textit{dimension-wise} approach), as shown in \citep{Sao2025}. In this context, this work proposes to adapt efficiently the methodology proposed by \citet{legratiet2014} for functional GSA and extend the error estimation to the GSI. In fact, the contributions of this work require additional techniques in comparison with \citet{legratiet2014} algorithm, such as basis expansions, vector-valued PF methods, and usage of other sensitivity indices, i.e. the GSI. To the best of our knowledge, this method has not been explored yet in literature and contributes to the functional-output GSA literature in terms of computational efficiency of error estimation.

The next section presents a background of variance-based GSA (Subsection \ref{subsec:gsa}), GPR surrogate modeling (Subsection \ref{subsec:GPR}) and \citet{legratiet2014} algorithm (Subsection \ref{subsec:legratiet_algorithm}). Then, Section \ref{sec:contribution} details the methods used to build the proposed algorithm, followed by illustrative examples in Section \ref{sec:results}. The conclusions close the work in Section \ref{sec:conclusion}.

\section{Background}
\label{sec:background}

\subsection{Global sensitivity analysis}
\label{subsec:gsa}

In this subsection, we present briefly an overview of the main concepts of global sensitivity analysis (GSA). Further developments can be found in \citet{DaVeiga2021}. We consider $\nvar$ input variables, viewed as random variables $X_i$ with probability distribution $\probDist_{\inputVectorRandom_i}$.     
The vector of input variables is denoted by $\inputVectorRandom = (\inputVectorRandom_1, \dots, \inputVectorRandom_{\nvar})$, with probability distribution $\probDist_{\inputVectorRandom}$. The input variables are assumed to be independent. We denote the corresponding model output as $\outputVectorRandom = \deterministicModel(\inputVectorRandom)$, assumed to be square-integrable and scalar. The Sobol-Hoeffding decomposition allows us to define variance-based sensitivity indices, also called Sobol' indices. 
Here, we will focus on closed Sobol' indices, given by Eq. \ref{eq:sobol_indices}. 
\begin{equation}
\label{eq:sobol_indices}
    \sobolClosedD{\indexSet} (\outputVectorRandom) = \Var( \Esp [\deterministicModel_\indexSet( \inputVectorRandom ) | \inputVectorRandom_\indexSet]) \ ; \  \sobolClosed{\indexSet}(\outputVectorRandom) =  \frac{\Var( \Esp [\deterministicModel_\indexSet( \inputVectorRandom ) | \inputVectorRandom_\indexSet])}{\Var(\deterministicModel(\inputVectorRandom))}
\end{equation}
In this definition, we denote the index set $\indexSet \in \mathcal{P}(\{ 1,\dots,\nvar \})$, where $\mathcal{P}(\{ 1,\dots,\nvar \})$ is the set of subsets $\{ 1,\dots, \nvar \}$. We have adopted the underline notation, used e.g. in  \citep{owen2013}, writing $\sobolClosed{\indexSet}$ for the closed Sobol' index and $\sobolClosedD{\indexSet}$ for the unnormalized version.
When the size of $\indexSet$ is one, closed Sobol' indices coincide with first-order indices. Furthermore, higher-order and total Sobol' indices can be deduced from them \citep{DaVeiga2021}. 
Thus, there is no loss of generality in considering closed Sobol' indices, and we will focus on PF schemes to estimate closed Sobol' indices. 

In a PF scheme, one considers two independent random vectors $\PFSampleA$ and $\PFSampleB$ drawn from  $\probDist_\inputVectorRandom$. The model output is evaluated twice: firstly, by computing $\outputVectorRandom= \deterministicModel(\PFSampleA)$; and secondly, by fixing (freezing) the coordinates of $\PFSampleA$ in $\indexSet$ and choosing (picking) the other coordinates (denoted ${(\cdot)}_{\exceptIndexSet}$) in $\PFSampleB$, leading to $\outputVectorRandom^{\indexSet} = \deterministicModel(\PFSampleA_{\indexSet}, \PFSampleB_{\exceptIndexSet})$. Then, the normalized closed Sobol' index is given by Eq. \ref{eq:sobol_pf_scheme}. 
\begin{equation}
\label{eq:sobol_pf_scheme}
    \sobolClosed{\indexSet}(\outputVectorRandom) = \frac{\Cov(\outputVectorRandom, \outputVectorRandom^{\indexSet})}{\Var ( \outputVectorRandom )} = \frac{ \Esp[ \outputVectorRandom \outputVectorRandom^{\indexSet} ] - \left( \Esp \left[ \frac{\outputVectorRandom + \outputVectorRandom^\indexSet}{2} \right] \right)^2 }{ \Esp \left[ \frac{(\outputVectorRandom)^2 + (\outputVectorRandom^\indexSet)^2}{2} \right] - \left( \Esp \left[ \frac{\outputVectorRandom + \outputVectorRandom^\indexSet}{2} \right] \right)^2}
\end{equation}
Let us now consider two independent samples\footnote{If $\probDist$ is some probability distribution, a sample of size $\nDoE$ -- or $\nDoE$-sample -- of $\probDist$ will denote a family of $\nDoE$ independent and identically distributed (i.i.d.) random variables with law $\probDist$. Using the common abuse of notations, depending on the context, a $\nDoE$-sample may also denote the $\nDoE$ real numbers obtained as a realization of a (random) $\nDoE$-sample.} $\PFPointA^{1}, \dots, \PFPointA^{\nPF}$ and $\PFPointB{1}, \dots, \PFPointB{\nPF}$ of $\probDist_{\inputVectorRandom}$. We call PF input samples the pair of $\nPF$-samples $\PFPointA^{\indPF}$ and $\left( \PFPointA_{\indexSet}^{\indPF} ,\PFPointB{\indPF}_{\exceptIndexSet} \right)$ ($\indPF=1,\dots,\nPF$). We call PF output samples the corresponding output values, denoted
\begin{equation} \label{eq:PF_output}
\PFoutput{\indPF} = \deterministicModel \left(\PFPointA^{\indPF}\right), \qquad \PFoutputStar{\indPF} = \deterministicModel\left(\PFPointA_\indexSet^{\indPF}, \PFPointB{\indPF}_{\exceptIndexSet}\right) \qquad (\indPF=1, \dots, \nPF).
\end{equation}

With these notations, Eq.~\ref{eq:sobol_pf_scheme} leads to the PF estimator presented in Definition~\ref{def:pf_scalar_valued}. Studied in \citep{Janon2014}, it has nice statistical properties: consistency, asymptotical normality and asymptotical efficiency.

\begin{defi}[Pick-freeze estimator of closed Sobol' indices for scalar-valued functions] 
\label{def:pf_scalar_valued}
The empirical estimator of the closed Sobol' index $\widehat{\sobolClosed{\indexSet}}^{pf}$  corresponding to Eq. \eqref{eq:sobol_pf_scheme}, called Janon-Monod PF estimator \citep{Monod2006, Janon2014}, is given by 

\begin{equation}
    \label{Eq.pick_freeze_scalar_janon}
    \pfEstim{\sobolClosed{\indexSet}} = 
    \frac{\pfEstim{\sobolClosedD{\indexSet}}}{\pfEstim{D}}
\end{equation}
with 
\begin{eqnarray}
    \pfEstim{\sobolClosedD{\indexSet}} &=&
\frac{1}{\nPF} \sum_{\indPF=1}^\nPF \PFoutput{\indPF} \PFoutputStar{\indPF} - \left( \pfEstim{f_0} \right)^2 \\
\pfEstim{D} &=& \frac{1}{\nPF} \sum_{\indPF=1}^{\nPF} \left[ \frac{(\PFoutput{\indPF})^2 + (\PFoutputStar{\indPF})^2}{2} \right]- \left( \pfEstim{f_0} \right)^2
\end{eqnarray}
and 
\begin{equation} \label{eq:PF_f0}
\pfEstim{f_0} = \frac{1}{\nPF} \sum_{\indPF=1}^{\nPF} \left[ \frac{\PFoutput{\indPF} + \PFoutputStar{\indPF}}{2} \right].
\end{equation}
\end{defi}

We refer to \citep{Jansen1999, Janon2014, Gamboa2016} for other estimators for the closed Sobol' index, presented with their statistical properties. The methodology presented in this paper may work with some of them, provided that they can be expressed as a quadratic form of 
$\PFoutput{1},\dots, \PFoutput{\nPF}, \PFoutputStar{1},\dots, \PFoutputStar{\nPF}$ (see Remark 3 from \cite{Sao2025}). 

When the function $\deterministicModel$ is vector-valued, the unnormalized closed Sobol' index of $\deterministicModel(\inputVectorRandom) = \outputVectorRandom$, denoted $\sobolClosedDmatrix{\indexSet}(\outputVectorRandom)$, is defined as the covariance matrix of the random vector $\Esp[\outputVectorRandom|\inputVectorRandom_\indexSet]$:
$$\sobolClosedDmatrix{\indexSet}(\outputVectorRandom) = \Cov(\Esp[\outputVectorRandom|\inputVectorRandom_\indexSet])$$

The PF estimators of unnormalized closed Sobol' index and overall variance are immediately extended for vector-valued function, following Definition \ref{def:pf_vector_valued}.

\begin{defi}[Pick-freeze estimator for vector-valued functions] 
\label{def:pf_vector_valued}
The PF estimator of the unnormalized closed Sobol' index $\sobolClosedDmatrix{\indexSet}$, the unnormalized total Sobol' index $\sobolTotDmatrix{\indexSet}$ and the covariance matrix $\CovMatrix$ of a vector-valued function, are defined by the matrices:
\begin{eqnarray}
\label{eq:vector_valued_pf}
\pfEstim{\sobolClosedDmatrix{\indexSet}} &=& \frac{1}{\nPF} \sum_{\indPF=1}^{\nPF} \PFoutputVec{\indPF} (\PFoutputStarVec{\indPF})^{\top} - \pfEstim{f_0}\left(\pfEstim{f_0}\right)^\top \\
\pfEstim{\CovMatrix} &=& \frac{1}{\nPF} \sum_{\indPF=1}^{\nPF} \left[ \frac{\PFoutputVec{\indPF} (\PFoutputVec{\indPF})^\top + \PFoutputStarVec{\indPF}(\PFoutputStarVec{\indPF})^\top}{2} \right] - \pfEstim{f_0}\left(\pfEstim{f_0}\right)^\top
\end{eqnarray}
with 
$$
\pfEstim{f_0} = \frac{1}{\nPF} \sum_{k=1}^{\nPF} \left[ \frac{\PFoutputVec{\indPF} + \PFoutputStarVec{\indPF}}{2} \right] 
$$
\end{defi}

In the context of sensitivity analysis of functional outputs, it is useful to define an estimate for the global influence of the inputs over the output domain. Let us consider the following simulator:
\begin{equation}
\begin{array}{@{}r c c c@{}}
\outputVectorDeterministic :  &
  \domain \subseteq \R^{\nvar}
& \longrightarrow &
  \R^{\nOutputDimensions} \\[2pt]
&
  \inputPoint
& \longmapsto &
  \outputVectorDeterministic(\inputPoint)
\end{array}
\end{equation}
where $\outputVectorDeterministic(\inputPoint)=(\outputHighDimensional(\inputPoint))_{\indOutputs = 1, \dots, \nOutputDimensions}$ is the vector output and $\nOutputDimensions$ is the number of output dimensions. 

Definition \ref{def:gsi} introduces the GSI \citep{Lamboni2011, Perrin2021}, which are concise scalar sensitivity indices that estimate the global influence of inputs over the entire output domain. We will consider here the closed GSI, denoted by $\GSI$, which is the multidimensional version of the closed Sobol' index.

\begin{defi}[Generalized sensitivity indices] 
\label{def:gsi}
The closed generalized sensitivity index of $\outputVectorDeterministic(\inputVectorRandom)$ with respect to $\inputVectorRandom_{\indexSet}$, where $\indexSet \subseteq \{ 1, \dots, \nvar \}$, is given by Eq. \ref{eq:gsi_definition}. 
\begin{equation}
\label{eq:gsi_definition}
    \GSI_{\indexSet} = \frac{\Tr{\Cov(\Esp[\outputVectorDeterministic(\inputVectorRandom)|\inputVectorRandom_\indexSet])}}{\Tr{\Cov(\outputVectorDeterministic(\inputVectorRandom))}}
\end{equation}
where $\Tr{\cdot}$ denotes the trace of a matrix.
\end{defi}

Following Definitions \ref{def:pf_vector_valued} and \ref{def:gsi}, we can write a PF estimator for the closed first-order and total GSI.

\begin{defi}[PF estimator of generalized sensitivity indices]
\label{def:PF_gsi}
The PF estimator of the closed GSI is written
    \begin{equation}
    \label{eq:gsi_estimator}
    \pfEstim{\GSI_\indexSet} = \frac{\Tr{\pfEstim{\sobolClosedDmatrix{\indexSet}}}}{\Tr{\pfEstim{\CovMatrix}}}.
    \end{equation}
\end{defi}

PF schemes often need hundreds or thousands of sample points, which may be computationally expensive if we assume a costly deterministic code. For this reason, a \textit{metamodel} (also known as \textit{surrogate model}) can be used to generate accurate approximate data with substantially less cost. Although any metamodeling technique can be chosen to predict output values from PF input samples, this work explores the Gaussian process regression (GPR) technique, since it can provide the variance, or uncertainty, around the mean prediction.

\subsection{Gaussian process regression}
\label{subsec:GPR}

In this work, we focus on the GPR technique, 
which provides a probabilistic model to interpolate data. More information can be found in \citet{Williams1995}. Let $\deterministicModel : \domain \subseteq \R^{\nvar} \rightarrow \R$ be a multivariate function, modeling a deterministic computer code. As a prior knowledge, let us assume that $\deterministicModel$ is one trajectory of a Gaussian process $\GaussianProcess$ on $\domain$ with mean function $\MeanFunction$ and covariance function, or kernel, $\KernelFunction$. We will simply denote $\GaussianProcess \sim \mathcal{GP}(\MeanFunction, \KernelFunction)$.
Now, a posterior can be obtained from a design of experiments 
$\DoE = \DoEset$ (with $\DoEpoint{\indPoints} \in \domain$, $\indPoints=1, \dots, \nDoE$) and associated observations $\outputScalarDeterministic_\indPoints = \deterministicModel(\DoEpoint{\indPoints})$ ($\indPoints=1, \dots, \nDoE$). 
Indeed, from the properties of Gaussian vectors, $\GaussianProcess$ conditional on $\{\GaussianProcess(\DoEpoint{\indPoints}) = \outputScalarDeterministic_\indPoints, \indPoints=1, \dots, n\}$ is still a Gaussian process. If we denote $\condGP$ this conditional GP, we have
\begin{equation}
    \condGP \sim \mathcal{GP}(\ConditionalMean, \ConditionalKernel)
\end{equation}
where $\ConditionalMean$ and $\ConditionalKernel$ are given in closed-form by
\begin{equation}
\begin{aligned}
\ConditionalMean(\inputPoint) &= \MeanFunction(\inputPoint) + \KernelVectorRow{\inputPoint} \KernelMatrix^{-1} \ObservationVector  \qquad (\inputPoint \in \domain) \\
\ConditionalKernel(\inputPoint, \inputPoint') &= \KernelFunction(\inputPoint, \inputPoint') - \KernelVectorRow{\inputPoint} \KernelFunction(\DoE, \DoE)^{-1} \KernelVector{\inputPoint'} \qquad (\inputPoint, \inputPoint' \in \domain)
\end{aligned}
\label{eq:kriging_mean_variance}
\end{equation}
where $\mathbf{\outputScalarDeterministic} = (\outputScalarDeterministic_\indPoints)_{\indPoints=1, \dots, \nDoE}$ is the column vector of observations, 
$\KernelMatrix = (\KernelFunction(\DoEpoint{\indPoints}, \DoEpoint{j}))_{1 \leq \indPoints,j \leq \nDoE}$ is the covariance matrix at design points, $\KernelVector{\inputPoint} = (\KernelFunction(\DoEpoint{\indPoints}, \inputPoint))_{1 \leq \indPoints \leq \nDoE}$ is the column vector of covariances between a new point $x$ and the design points, and $\KernelVectorRow{\inputPoint} = \KernelVector{\inputPoint}^\top$.

A single trajectory of $\condGP$ at a vector of locations $\mathbf{t} = (t_1, \dots, t_m)$ can be obtained by simulating a Gaussian vector with mean $(\ConditionalMean(t_i))_{i=1, \dots, m}$ and covariance matrix $(\ConditionalKernel(t_i, t_j))_{1 \leq i,j \leq m}$. To do so, the Cholesky decomposition is often used. Notice that there exist improved simulation schemes for specific kernels (typically associated to stationary GPs) and/or specific DoEs (see e.g. \cite{legratiet2014}). We will not consider them here, as our aim is to present a general methodology.

\subsection{Algorithm proposed by \citet{legratiet2014}}
\label{subsec:legratiet_algorithm}

Here, we assume that the conditional GP $\condGP$ has already been built, from an initial DoE $\DoE$ and the corresponding observations, as explained in Section~\ref{subsec:GPR}. In order to propagate the errors to the Sobol' indices in the scalar-output scenario, an intuitive and \textit{crude} approach is shown by Algorithm \ref{algo:crude}. First, $\sobolClosed{\indexSet}$ is defined on a trajectory of the GP $\condGP$. This is a random variable related to the probability space $(\Omega_Z, \tribu_Z, \prob_Z)$ associated to the GP. Then, for a realization $\sobolClosed{\indexSet}(\omega_Z)$, where $\omega_Z \in \Omega_Z$, its PF estimator $\pfEstim{\sobolClosed{\indexSet}(\omega_Z)}$ is a random variable related to the probability space $(\Omega_X, \tribu_X, \prob_X)$ associated to the inputs uncertainty. Assuming independence between the two sources of uncertainties, $\pfEstim{\sobolClosed{\indexSet}}$ is a random variable related to the product probability space $(\Omega_X \times \Omega_Z, \sigma(\tribu_X \times \tribu_Z), \prob_X \otimes \prob_Z)$. Its law can be estimated by drawing independently trajectories of the GP and PF input samples from the inputs distribution. 

\begin{algorithm}[h!]  
\caption{ - Crude algorithm.}
\begin{algorithmic}[1]
\State \textbf{Input:} $\condGP$ (conditional Gaussian process); 
$\probDist_{\inputVectorRandom}$ (distribution of input variables); $\nPF$ (size of input samples); $\nX$ (number of input samples); $\NumberGPite$ (number of GP trajectories).

\For{$\indBoot = 1$ to $\nX$}  \Comment{Input sampling loop}
    \State Draw independently from $\probDist_{\inputVectorRandom}$ two $\nPF$-samples $\PFSampleA, \PFSampleB \in \domain^\nPF$;
    \State Generate PF input samples: $\pickFreezeVector = \PFSampleA$, $\pickFreezeVectorTilde = (\PFSampleA_{\indexSet}, \PFSampleB_{\exceptIndexSet})$;
    
    \For{$\indGPR = 1$ to $\NumberGPite$} \Comment{GP sampling loop}
    \State Sample a trajectory of $\condGP$ at the $2\nPF$ locations contained in  $\pickFreezeVector$ and $\pickFreezeVectorTilde$; 

    \State Compute $\pfEstim{\sobolClosed{\indexSet,\indGPR,\indBoot}}$ using Definition \ref{def:pf_scalar_valued} with $\PFoutput{\indPF} = \condGP(\pickFreezePoint)$ and $\PFoutputStar{\indPF} = \condGP(\pickFreezePointTilde)$;

    \EndFor
\EndFor

\noindent
\Return $\left[\pfEstim{\sobolClosed{\indexSet,\indGPR,\indBoot}}\right]_{\substack{\indGPR = 1,..., \NumberGPite \\ \indBoot=1,...,\nX}}$
\end{algorithmic}
\label{algo:crude}
\end{algorithm}

The drawback with Algorithm~\ref{algo:crude} is its computational cost. In particular, for each input sample $\pickFreezeVector$ (and $\tilde{\pickFreezeVector}$), a trajectory of the conditional GP is drawn at the locations of $\pickFreezeVector$ (and $\Tilde{\pickFreezeVector}$). To reduce the cost of GP sampling, \citet{legratiet2014} propose to draw input samples by bootstrap, i.e. by resampling with replacement from a first input sample obtained from the input distribution. Using bootstrap is equivalent to sample from the empirical distribution instead of the theoretical one, and thus makes little difference when $\nPF$ is large. The advantage is that the same (conditional) GP trajectory can be used for all boostrapped input samples, as they correspond to a subset of the computed values for the first sample.
The improved Algorithm is presented in Algorithm \ref{algo:leGratiet}.
We can see that, compared to Algorithm \ref{algo:crude}, the sampling order is reversed. First, a trajectory of the conditional GP is sampled (Line 5) at the locations of the PF input samples $\pickFreezeVector, \tilde{\pickFreezeVector}$ (Line 3), from which a first estimation of the Sobol' index is computed (Line 6). Then, new samples are generated by bootstrap from this first sample (Line 8 and 9), without the need of GP sampling. 

\begin{algorithm}[h!]
\caption{ proposed by Le Gratiet et al. (2014).}
\begin{algorithmic}[1]
\State \textbf{Input:} 
$\condGP$ (conditional Gaussian process); $\probDist_{\inputVectorRandom}$ (distribution of input variables); $\nPF$ (size of input samples); $\nBoot$ (number of input samples); $\NumberGPite$ (number of GP trajectories).

\State Draw independently from $\probDist_{\inputVectorRandom}$ two $\nPF$-samples $\PFSampleA, \PFSampleB \in \domain^\nPF$;

\State Generate PF input samples: $\pickFreezeVector = \PFSampleA$, $\pickFreezeVectorTilde = (\PFSampleA_{\indexSet}, \PFSampleB_{\exceptIndexSet})$;

\For{$\indGPR = 1$ to $\NumberGPite$} \Comment{GP sampling loop}

    \State Sample a trajectory of $\condGP$ at the $2\nPF$ locations contained in  $\pickFreezeVector$ and $\pickFreezeVectorTilde$; 

    \State Compute $\pfEstim{\sobolClosed{\indexSet,\indGPR,1}}$ using Definition \ref{def:pf_scalar_valued} with $\PFoutput{\indPF} = \condGP(\pickFreezePoint)$ and $\PFoutputStar{\indPF} = \condGP(\pickFreezePointTilde)$;

        \For{$\indBoot = 2$ to $\nBoot$} \Comment{Input resampling loop}
            \State Draw $\indPF_1, \dots, \indPF_\nPF$ independently from  $\mathcal{U}\{1, \dots, \nPF\}$;

            \State Compute $\pfEstim{\sobolClosed{\indexSet,\indGPR,\indBoot}}$ replacing the PF output samples $\PFoutput{1}, \PFoutputStar{1}, \dots, \PFoutput{\nPF}, \PFoutputStar{\nPF}$ by the resampled ones $\PFoutput{\indPF_1}, \PFoutputStar{\indPF_1}, \dots, \PFoutput{\indPF_\nPF},  \PFoutputStar{\indPF_\nPF}$;

            \State Redo Line 6. Store the results in $\pfEstim{\sobolClosed{\indexSet,\indGPR,\indBoot}}(\outputHighDimensional)$;
             
        \EndFor
\EndFor

\noindent
\Return $\left[\pfEstim{\sobolClosed{\indexSet,\indGPR,\indBoot}}\right]_{\substack{\indGPR = 1,..., \NumberGPite \\ \indBoot=1,...,\nBoot}}$
\end{algorithmic}
\label{algo:leGratiet}
\end{algorithm}

Two types of distributions are obtained for a given set of dimensions $\indexSet$. The first type is a distribution accounting for the metamodeling-only error, indicated by the subscripted index $(\cdot)_{\indexSet,\indGPR,1}$, where all $\indGPR$-th trajectories are explored without performing bootstraping. The second type is a distribution accounting for the overall error (metamodeling and estimation), indicated by the subscripted index $(\cdot)_{\indexSet,\indGPR,b}$, containing all $\indBoot$-th bootstraped estimations of every $\indGPR$-th trajectory. These distribution allow to compute statistics related to each error source separately, i.e. metamodeling- and estimation-only errors.

Finally, we note that \citet{legratiet2014} further explore several improvements, such as strategies to sample conditional GP trajectories from unconditioned ones, and to choose optimal number of PF points. Here, we limit our work to the simplest version described above. We can notice that this procedure has been introduced for scalar outputs. The next section extends it to functional outputs, such as time series or spatial maps.

\section{Errors on sensitivity indices for functional outputs}
\label{sec:contribution}

In this section, we adapt Algorithm \ref{algo:leGratiet} to quantify metamodeling and PF-estimation errors in the functional-GSA context, e.g. \citep{Li2020, Nagel2020, Jung2023, Sao2025}. We show that errors from different sources, namely from the metamodeling procedure and from estimation of Sobol' indices, can be computed efficiently over all output dimensions, by taking advantage of basis expansion techniques and by using bootstrap to draw input samples. A general description of basis expansion techniques is given in Subsection \ref{subsec:basis_expansion}. We also present the estimation of errors associated to the GSI \citep{Lamboni2011, Perrin2021}. The estimation of sensitivity indices using basis expansion is described in Subsection \ref{subsec:estimation_sensitivity}. Finally, Subsection \ref{subsec:algorithm} details the proposed algorithm of this work.

\subsection{Basis expansion of functional outputs}
\label{subsec:basis_expansion}
 
In the context of functional-output GSA, we assume that the number of output dimensions $\nOutputDimensions$ is large. Functional basis decomposition can be used to project output data onto a lower-dimensional subspace of $\nbasis \ll \nOutputDimensions$ by using a general linear basis expansion. Common choices include PCA, wavelets, Fourier basis, and B-splines, among others. Let us consider that output data $\outputHighDimensional$ are expanded in a general linear functional basis of dimension $\nbasis$, where $(\cdot)_{\indOutputs}$ represents the index of each output dimension, as follows:   
\begin{equation}
    \label{Eq.basis_expansion_truncated}
    \outputHighDimensional (\inputPoint) \approx \sum_{\indBasis = 1}^{\nbasis} \coef_\indBasis(\inputPoint) \basis_{\indBasis, \indOutputs} 
\end{equation}
where $\coefVector(\inputPoint) = (\coef_1(\inputPoint), \dots, \coef_{\nbasis}(\inputPoint))$ 
is the vector of basis coefficients and $\basisVec = (\basis_{1,\indOutputs}, \dots, \basis_{\nbasis,\indOutputs})$, is the vector of basis components defined in every output dimension $\indOutputs$.
We set $\basisMatrix = (\basis_{\indBasis, \indOutputs})_{1 \leq \indBasis \leq \nbasis, \\ 1 \leq \indOutputs \leq \nOutputDimensions}$ the matrix whose lines contain the basis vectors. 
The associated Gram matrix, containing the scalar products between the basis vectors $\basis_{\indBasis, .}$ ($\indBasis = 1, \dots, \nbasis$), can be computed as $\GramMatrix = \basisMatrix \basisMatrix^{\top}$.

In theory, the summation encompasses all $\nOutputDimensions$ terms; however, in order to effectively perform dimension-reduction, the summation is truncated into the most important $\nbasis$ terms of the expansion. In this approach, we neglect the approximation error (when we  project the output onto the linear space spanned by the $\basisVec$ functions). It is possible to account for it, at least approximately in the PCA framework, e.g. by adding a Gaussian noise whose variance is equal to the proportion of variance left unexplained. \citet{Li2020} also presents a brief discussion about the errors from basis expansions.

\subsection{Estimation of closed sensitivity indices using basis expansions}
\label{subsec:estimation_sensitivity}

We can leverage basis expansion to obtain fast formulas to compute both PF estimators of sensitivity maps and GSI. These formulas rely on the matrix-valued PF estimator of the vector of basis coefficients.

\begin{prop}[Basis-derived pick-freeze estimation of sensitivity maps] \label{prop:basisDerived_SobolIndex}
    The PF estimator of the closed normalized Sobol' indices of output dimensions  $\pfEstim{\sobolClosed{\indexSet}}(\outputHighDimensional) = \pfEstim{\sobolClosedD{\indexSet}}(\outputHighDimensional) / \pfEstim{D}(\outputHighDimensional)$ can be computed with
\begin{equation} \label{eq:reprojection}
     \pfEstim{\sobolClosedD{\indexSet}}(\outputHighDimensional) = \basisVec^\top \, \pfEstim{\sobolClosedDmatrix{\indexSet}}(\coefVector) \, \basisVec, 
    \qquad \pfEstim{D}(\outputHighDimensional) = \basisVec^\top \, \pfEstim{\CovMatrix}(\coefVector) \, \basisVec
\end{equation}
\end{prop}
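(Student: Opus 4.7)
My plan is to exploit the linearity of the basis expansion to push the map $\basisVec \mapsto \basisVec^\top (\cdot) \basisVec$ outside the empirical averages that define the pick--freeze estimators, so that the scalar-valued estimator for $\outputHighDimensional$ becomes a quadratic form in $\basisVec$ whose matrix kernel is exactly the vector-valued estimator for the coefficient vector $\coefVector$.

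Concretely, I would first rewrite the truncated expansion \eqref{Eq.basis_expansion_truncated} as the scalar product $\outputHighDimensional(\inputPoint) = \basisVec^\top \coefVector(\inputPoint)$, so that on any pick--freeze input samples the output samples at dimension $\indOutputs$ are
\begin{equation*}
\PFoutput{\indPF}_\indOutputs = \basisVec^\top \coefVector(\PFPointA^{\indPF}), \qquad \PFoutputStar{\indPF}_\indOutputs = \basisVec^\top \coefVector\!\left(\PFPointA_\indexSet^{\indPF}, \PFPointB{\indPF}_{\exceptIndexSet}\right).
\end{equation*}
Plugging these into the scalar-valued PF estimator of Definition~\ref{def:pf_scalar_valued}, the product $\PFoutput{\indPF}_\indOutputs \PFoutputStar{\indPF}_\indOutputs$ equals $\basisVec^\top \coefVector^{\indPF} (\coefVector^{\indexSet,\indPF})^\top \basisVec$; by linearity of the sum this factors to $\basisVec^\top \bigl[\tfrac{1}{\nPF}\sum_\indPF \coefVector^{\indPF} (\coefVector^{\indexSet,\indPF})^\top\bigr] \basisVec$. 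The same manipulation applied to the empirical mean $\pfEstim{f_0}$ gives $\pfEstim{f_0}(\outputHighDimensional) = \basisVec^\top \pfEstim{f_0}(\coefVector)$, so $(\pfEstim{f_0}(\outputHighDimensional))^2 = \basisVec^\top \pfEstim{f_0}(\coefVector)(\pfEstim{f_0}(\coefVector))^\top \basisVec$. Subtracting and identifying with the matrix-valued PF estimator of Definition~\ref{def:pf_vector_valued} yields the first identity in \eqref{eq:reprojection}. The identity for $\pfEstim{D}(\outputHighDimensional)$ follows by exactly the same computation applied to the symmetrized second-moment expression, noting that $(\PFoutput{\indPF}_\indOutputs)^2 = \basisVec^\top \coefVector^{\indPF}(\coefVector^{\indPF})^\top \basisVec$ and similarly for the starred samples.

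There is essentially no hard step: the proof is bookkeeping around the fact that $\basisVec^\top(\cdot)\basisVec$ is a linear functional on the space of $\nbasis \times \nbasis$ matrices, and both scalar and matrix PF estimators are empirical averages of rank-one expressions in the outputs. The only point that deserves mild care is consistency of notation, in particular verifying that $\pfEstim{f_0}(\outputHighDimensional) = \basisVec^\top \pfEstim{f_0}(\coefVector)$ so that the correction terms $-(\pfEstim{f_0})^2$ and $-\pfEstim{f_0}(\pfEstim{f_0})^\top$ match after sandwiching with $\basisVec^\top \cdot \basisVec$; once this is checked, the two formulas in \eqref{eq:reprojection} are immediate.
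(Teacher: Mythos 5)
Your proof is correct and complete. The paper itself does not prove Proposition~\ref{prop:basisDerived_SobolIndex} --- it simply cites \citet{Sao2025} --- so your blind derivation supplies the missing details, and it is the natural argument: writing $\outputHighDimensional(\inputPoint) = \basisVec^\top \coefVector(\inputPoint)$, observing that every term of the scalar estimator in Definition~\ref{def:pf_scalar_valued} (the cross-products, the squared terms, and $\pfEstim{f_0}$) becomes a quadratic or linear form in $\basisVec$ with the corresponding empirical average of rank-one matrices as kernel, and matching those kernels with Definition~\ref{def:pf_vector_valued}. One cosmetic point worth noting: $\pfEstim{\sobolClosedDmatrix{\indexSet}}(\coefVector)$ is not symmetric in general, but since $\basisVec^\top M \basisVec = \basisVec^\top M^\top \basisVec$ the quadratic form only depends on its symmetric part, so the identity holds as stated.
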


\begin{proof}
    These formulas are derived in \citep{Sao2025}. We also refer to \cite{Li2020} for a similar result presented when the basis expansion is  PCA.
\end{proof}

\begin{prop}[Basis-derived pick-freeze estimation of generalized sensitivity indices]
\label{prop:gsi_estimator_basis_derived}
The basis-derived form of Definition \ref{def:gsi} is given by:

\begin{equation}\label{eq:gsi_theoretical_basis_derived}
    \GSI_\indexSet(\outputVectorDeterministic) = \frac{\Tr{\sobolClosedDmatrix{\indexSet} (\coefVector) \, \GramMatrix}}{\Tr{\CovMatrix (\coefVector) \, \GramMatrix}}
\end{equation}

Furthermore, the PF estimator of $\GSI_\indexSet(\outputVectorDeterministic)$ can be rewritten in the same form: 
\begin{equation}
\label{eq:gsi_estimator_basis_derived}
\pfEstim{\GSI_\indexSet(\outputVectorDeterministic)} = \frac{\Tr{\pfEstim{\sobolClosedDmatrix{\indexSet}}(\coefVector) \, \GramMatrix}}{\Tr{\pfEstim{\CovMatrix} (\coefVector)\, \GramMatrix}}
\end{equation}
\end{prop}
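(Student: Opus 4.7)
The plan is to first derive the theoretical identity (Eq.~\ref{eq:gsi_theoretical_basis_derived}) by rewriting the output vector as a linear image of the coefficient vector, then reuse exactly the same linear-algebraic manipulations at the empirical level to obtain the PF estimator identity (Eq.~\ref{eq:gsi_estimator_basis_derived}). The core identity that does the work in both cases is the cyclic property of the trace together with the definition $\GramMatrix = \basisMatrix \basisMatrix^\top$.

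First, I would write the truncated basis expansion of Eq.~\ref{Eq.basis_expansion_truncated} in matrix form as $\outputHighDimensionalVector(\inputPoint) = \basisMatrix^\top \coefVector(\inputPoint)$, where $\outputHighDimensionalVector$ is the $\nOutputDimensions$-dimensional output vector and $\coefVector$ the $\nbasis$-dimensional coefficient vector. Since conditional expectation and covariance commute with linear transformations by a constant matrix, I get
\begin{equation*}
\sobolClosedDmatrix{\indexSet}(\outputVectorDeterministic)
= \Cov\bigl(\Esp[\basisMatrix^\top \coefVector(\inputVectorRandom)\,|\,\inputVectorRandom_\indexSet]\bigr)
= \basisMatrix^\top \, \sobolClosedDmatrix{\indexSet}(\coefVector) \, \basisMatrix,
\qquad
\CovMatrix(\outputVectorDeterministic) = \basisMatrix^\top \, \CovMatrix(\coefVector) \, \basisMatrix.
\end{equation*}
Applying the cyclic property $\Tr{\basisMatrix^\top M \basisMatrix} = \Tr{M \basisMatrix \basisMatrix^\top} = \Tr{M \GramMatrix}$ to both numerator and denominator of Definition~\ref{def:gsi} yields Eq.~\ref{eq:gsi_theoretical_basis_derived}.

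For the PF version, I would note that the vector-valued PF estimators in Definition~\ref{def:pf_vector_valued} are quadratic in the output samples. Plugging $\outputVectorRandom^{(k)} = \basisMatrix^\top \coefVector(\PFPointA^k)$ and $\outputVectorRandom^{\indexSet,(k)} = \basisMatrix^\top \coefVector(\PFPointA^k_\indexSet, \PFPointB{k}_{\exceptIndexSet})$ into the sums and factoring out $\basisMatrix^\top$ on the left and $\basisMatrix$ on the right gives exactly
\begin{equation*}
\pfEstim{\sobolClosedDmatrix{\indexSet}}(\outputVectorDeterministic) = \basisMatrix^\top \, \pfEstim{\sobolClosedDmatrix{\indexSet}}(\coefVector) \, \basisMatrix, \qquad
\pfEstim{\CovMatrix}(\outputVectorDeterministic) = \basisMatrix^\top \, \pfEstim{\CovMatrix}(\coefVector) \, \basisMatrix,
\end{equation*}
so that substituting into Eq.~\ref{eq:gsi_estimator} and applying the same trace manipulation produces Eq.~\ref{eq:gsi_estimator_basis_derived}. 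In effect, the PF identity is just the theoretical identity with the empirical measure replacing the true law.

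There is no real obstacle: the argument is a routine bookkeeping exercise in matrix algebra. The only step that requires care is tracking conventions, specifically that $\basisMatrix$ is $\nbasis \times \nOutputDimensions$ (rows are basis vectors), so that $\outputVectorDeterministic = \basisMatrix^\top \coefVector$ and $\GramMatrix = \basisMatrix \basisMatrix^\top$ is the $\nbasis \times \nbasis$ Gram matrix appearing inside the trace. Once this is fixed, both identities follow immediately from linearity and the cyclic property of the trace, and one can reference Proposition~\ref{prop:basisDerived_SobolIndex} (itself proved in \citep{Sao2025}) to avoid re-deriving the coefficient-to-output transport formula for the PF estimators.
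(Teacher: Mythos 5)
Your proof is correct and follows essentially the same route as the paper: expressing $\outputVectorDeterministic = \basisMatrix^\top \coefVector$, transporting the (conditional) covariances through this linear map, and applying the cyclic property of the trace with $\GramMatrix = \basisMatrix\basisMatrix^\top$. The only cosmetic difference is that for the PF part you establish the full matrix congruence $\pfEstim{\CovMatrix}(\outputVectorDeterministic) = \basisMatrix^\top \pfEstim{\CovMatrix}(\coefVector)\basisMatrix$ directly, whereas the paper sums the diagonal entries given by Proposition~\ref{prop:basisDerived_SobolIndex} and identifies $\GramMatrix = \sum_{\indOutputs} \basisVec\basisVec^\top$ --- the two computations are equivalent.
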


\begin{proof} With the notations of Section~\ref{subsec:basis_expansion}, 
we can write
\begin{equation} 
\label{eq:basis_expansion_appendix}
    \outputVectorDeterministic(\inputPoint)  = \basisMatrix^\top \coefVector(\inputPoint).
\end{equation} 
Now consider the denominator of GSI, as defined in \eqref{eq:gsi_definition}. We have
$$ \Cov(\outputVectorDeterministic(\inputVectorRandom)) = \Cov(\basisMatrix^\top \coefVector(\inputVectorRandom))   = \basisMatrix^\top \Cov(\coefVector(\inputVectorRandom)) \basisMatrix. $$
Using the commutativity property of trace, we obtain 
$$ \Tr{\Cov(\outputVectorDeterministic(\inputVectorRandom))}  
= \Tr{ \Cov(\coefVector(\inputVectorRandom)) \basisMatrix \basisMatrix^\top} 
= \Tr{ \Cov(\coefVector(\inputVectorRandom)) \GramMatrix}.$$
An analogous computation for the numerator of \eqref{eq:gsi_definition} gives 
$$ \Tr{\Cov(\Esp[\outputVectorDeterministic(\inputVectorRandom)|\inputVectorRandom_\indexSet])} = 
\Tr{\Cov(\Esp[\coefVector(\inputVectorRandom)|\inputVectorRandom_\indexSet]) \, \GramMatrix},$$
which leads to Equation~\ref{eq:gsi_theoretical_basis_derived}.\\

Let us now prove Equation~\ref{eq:gsi_estimator_basis_derived}.
Recall the definition of the GSI index
$$ \pfEstim{\GSI_\indexSet}(\outputVectorDeterministic) = \frac{\Tr{\pfEstim{\sobolClosedDmatrix{\indexSet}}(\outputVectorDeterministic)}}{\Tr{\pfEstim{\CovMatrix}(\outputVectorDeterministic)}}.$$
Let us first consider the denominator. From 
Proposition~\ref{prop:basisDerived_SobolIndex}, 
\begin{eqnarray*}
   \Tr{\pfEstim{\CovMatrix}(\outputVectorDeterministic)} &=& \sum_{\indOutputs=1}^\nOutputDimensions \pfEstim{D}(\outputHighDimensional) = 
\sum_{\indOutputs=1}^\nOutputDimensions
\basisVec^\top \, \pfEstim{\CovMatrix}(\coefVector) \, \basisVec\\
&=& \Tr{\sum_{\indOutputs=1}^\nOutputDimensions
\basisVec^\top \, \pfEstim{\CovMatrix}(\coefVector) \, \basisVec } = \Tr{ \pfEstim{\CovMatrix}(\coefVector) \sum_{\indOutputs=1}^\nOutputDimensions
  \, \basisVec \basisVec^\top \,}
\end{eqnarray*}
Remarking that $\GramMatrix = \sum_{\indOutputs=1}^\nOutputDimensions
  \, \basisVec \basisVec^\top \, $, we deduce that
$$ \Tr{\pfEstim{\CovMatrix}(\outputVectorDeterministic)} = \Tr{\pfEstim{\CovMatrix}(\coefVector) \GramMatrix}.$$
A similar computation shows that the numerator of the PF estimator of GSI verifies
$$ \Tr{\pfEstim{\sobolClosedDmatrix{\indexSet}}(\outputVectorDeterministic)} = \Tr{\pfEstim{\sobolClosedDmatrix{\indexSet}}(\coefVector) \GramMatrix}, $$
which concludes the proof.
\end{proof}

\begin{remark}[Calculation of the trace of two matrices] The trace of two matrices can be computed as follows
    $$ \Tr{A B} = \sum_{\indBasis} (A B)_{\indBasis,\indBasis} = \sum_{\indBasis} \sum_{\indBasis'} A_{\indBasis, \indBasis'} B_{\indBasis, \indBasis'} = \sum_{\indBasis, \indBasis'} (A \odot B)_{\indBasis, \indBasis'}$$
where $A \odot B$ is the Hadamard (element-wise) product.
Thus the traces above can be computed using $2 \nbasis^2$ operations.
\end{remark}

\begin{remark}[Covariance estimation] \label{remark:covariance_estimation}
    A similar formula for the GSI is derived in \citep{Perrin2021} and a particular case for PCA is considered in \citep{Lamboni2011, Perrin2021}. A common approach is to estimate the covariance matrix $\pfEstim{\CovMatrix}$ once considering output data from the DoE. In this work, we propose the estimator given by Eq. \ref{eq:gsi_estimator}, where the covariance matrix $\pfEstim{\CovMatrix} (\coefVector)$ is computed at each trajectory and at each bootstrap repetition. Both approaches present negligible differences in the experiments shown in Section \ref{sec:results} (see Appendix \ref{app:nGPR_nBoots}). However, our goal is not to study the asymptotic properties of the estimator, such as consistency, asymptotical efficiency and normality. This question has been addressed in \cite{Gamboa_Janon_Klein_Lagnoux_FunctionalGSA} for the Janon-Monod estimator and could be extended for other ones using the techniques presented therein.
\end{remark}

\subsection{Extension to other pick-freeze estimators}

As already remarked in \citep{Sao2025} (Remark 3), the formulas of Proposition~\ref{prop:basisDerived_SobolIndex} are not limited to the specific PF estimator of Definition~\ref{def:pf_scalar_valued} nor to closed indices. They are immediately extended to cases where the PF estimator is a quadratic form of the output PF samples  $\PFoutput{1}, \PFoutputStar{1}, \dots, \PFoutput{\nPF}, \PFoutputStar{\nPF}$.

Let us illustrate this point on the total Sobol' index $\sobolTot{\indexSet}$, which represents the proportion of variance explained by the input variables in $\indexSet$ and their interactions. It can be written as (see e.g. \cite{DaVeiga2021})
\begin{equation}
\label{eq:total_sobol_indices}
    \sobolTot{\indexSet} = \frac{\sobolTotD{\indexSet}}{D} , \qquad \sobolTotD{\indexSet} = D - \sobolClosedD{\exceptIndexSet} = \frac{1}{2} \Var\left( Y - Y^{-\indexSet} \right)
\end{equation}
where $\exceptIndexSet = \{ 1,\dots,\nvar \} \setminus \indexSet$.
The corresponding total GSI is defined by
\begin{equation}
\label{eq:totalgsi_definition}
    \TotalGSI{\indexSet} = 1 - \GSI_{\exceptIndexSet}.
\end{equation}
Using the first equality in the definition of $\sobolTotD{\indexSet}$ in \eqref{eq:total_sobol_indices}, we can consider the `plug-in' PF estimators of $\sobolTot{\indexSet}$ and $\TotalGSI{\indexSet}$ obtained by plugging in \eqref{eq:total_sobol_indices}
 and \eqref{eq:totalgsi_definition} the PF estimators of the closed Sobol' indices and closed GSI:
\begin{equation}
    \pfEstimText{\sobolTot{\indexSet}}{plug-in} = 1 - \pfEstim{\sobolClosed{\exceptIndexSet}}, \qquad 
    \pfEstimText{\TotalGSI{\indexSet}}{plug-in} = 1 - \pfEstim{\GSI_{\exceptIndexSet}}
\end{equation}
Alternatively, using the second equality in the definition of $\sobolTotD{\indexSet}$ in \eqref{eq:total_sobol_indices}, we obtain the Jansen estimator \citep{Jansen1999}
\begin{equation}
\label{eq:pick_freeze_total_jansen}
\begin{aligned}
\pfEstim{\sobolTot{\indexSet}}
&= \frac{\pfEstim{\sobolTotD{\indexSet}}}{\pfEstim{D}}, \\[2pt]
\pfEstim{\sobolTotD{\indexSet}}
&= \frac{1}{2 \nPF}\sum_{\indPF=1}^{\nPF}
\left( \PFoutput{\indPF} - \PFoutputStarTotal{\indPF} \right)^2, \\[2pt]
\pfEstim{\sobolTotDmatrix{\indexSet}}
&= \frac{1}{2 \nPF}\sum_{\indPF=1}^{\nPF}
\left( \PFoutput{\indPF} - \PFoutputStarTotal{\indPF} \right)
\left( \PFoutput{\indPF} - \PFoutputStarTotal{\indPF} \right)^\top .
\end{aligned}
\end{equation}

where the third expression corresponds to the vector-valued case. Using $\pfEstim{\sobolTotD{\indexSet}}$ is recommended to estimate $\sobolTotD{\indexSet}$ as it is asymptotically efficient, non-negative and vanishes if the true total Sobol' index is zero \cite{Fruth_TotalInteraction}. We can see that it is a quadratic form of the PF input samples $\PFoutput{1}, \PFoutputStar{1}, \dots, \PFoutput{\nPF}, \PFoutputStar{\nPF}$. Thus, the formulas of Proposition~\ref{prop:basisDerived_SobolIndex} hold when we replace the closed PF estimator by the (Jansen) PF one.
Furthermore, similarly to Definition~\ref{def:PF_gsi}, the (Jansen) PF estimator of the total GSI can be defined as
    \begin{equation}
    \label{eq:total_gsi_estimator}
    \pfEstim{\TotalGSI{\indexSet}} = \frac{\Tr{\pfEstim{\sobolTotDmatrix{\indexSet}}}}{\Tr{\pfEstim{\CovMatrix}}}.
    \end{equation}
Following the proof of Proposition~\ref{prop:gsi_estimator_basis_derived}, we can see that formula \eqref{eq:gsi_estimator_basis_derived}
also holds when replacing the closed PF estimator by the (Jansen) PF one.
In summary, we have:
\begin{equation}
\pfEstim{\sobolTotD{\indexSet}}(\outputHighDimensional) = \basisVec^\top \, \pfEstim{\sobolTotDmatrix{\indexSet}}(\coefVector) \, \basisVec,
\qquad
\pfEstim{\overline{\text{GSI}_\indexSet} 
(\outputVectorDeterministic)} = \frac{\Tr{\pfEstim{\sobolTotDmatrix{\indexSet}}(\coefVector) \, \GramMatrix}}{\Tr{\pfEstim{\CovMatrix} (\coefVector)\, \GramMatrix}}.
\end{equation}

\subsection{Proposed algorithm}
\label{subsec:algorithm}
Here, we use Algorithm \ref{algo:leGratiet} for the basis components by employing a vector-valued PF estimation and then compute the Sobol' indices at the original output dimensions and the GSI. This results Algorithm \ref{proposed_algorithm}, which effectively adapts Algorithm \ref{algo:leGratiet} to the functional GSA context.\\

To start Algorithm \ref{proposed_algorithm}, we assume that the basis components $\basisVec$ and the Gram matrix $\GramMatrix$ have been computed from a basis expansion (Eq. \ref{Eq.basis_expansion_truncated}) with $\nbasis$ components. We also assume that independent GP models $\condGP^{(\indBasis)}$, with $\indBasis=1,...,\nbasis$, were fitted for all components of $\coefVector(\inputPoint)$. We denote by $\condGPVector$ the vector-valued GP, where for all $\inputPoint \in \domain$, $\condGPVector(\inputPoint) = (\condGP^{(\indBasis)}(\inputPoint))_{\indBasis = 1 , \dots, \nbasis}$ is a column vector containing all GP values at $\inputPoint$. These inputs are defined in Line 1 of the proposed algorithm.

Two samples $\PFSampleA$ and $\PFSampleB$ of size $\nPF$ are drawn independently from a probability distribution $\probDist_{\inputVectorRandom}$ in Line 2, allowing the construction of two PF input samples $\pickFreezeVector = \PFSampleA$ and $\pickFreezeVectorTilde = (\PFSampleA_\indexSet, \PFSampleB_{\exceptIndexSet})$ in Line 3. Then, a $\indGPR$-th trajectory of $\condGPVector$ is sampled at the locations of the PF input vector by sampling independently one trajectory of each GP $\condGP^{(\indBasis)}$ ($\indBasis=1, \dots, \nbasis$) (Line 5). A practical remark about sampling trajectories of GPs computationally is given in Remark \ref{remark:sampling_trajectories}. For this trajectory, we estimate through Definition \ref{def:pf_vector_valued} the vector-valued unnormalized closed Sobol' indices of the basis components $\pfEstim{\sobolClosedDmatrix{\indexSet,\indGPR,1}}\in \R^{\nbasis\times \nbasis}$ and the overall covariance matrix of basis components $\pfEstim{\CovMatrix}_{\indGPR,1}\in \R^{\nbasis\times \nbasis}$ (Line 6). These matrices allow the computation of closed Sobol' indices for each output dimension $\pfEstim{\sobolClosed{\indexSet,\indGPR,1}}(\outputHighDimensional) \in \R$ (Eq. \ref{eq:reprojection}) in Lines 7-9. Next, in Line 10, the GSI can be estimated empirically using Eq. \ref{eq:gsi_estimator_basis_derived} presented in Proposition \ref{prop:gsi_estimator_basis_derived}. Note that the subscripted index $(\cdot)_{\indexSet,\indGPR,1}$ refers to the estimations obtained without bootstrap ($\indBoot=1$) for a given $\indGPR$-th trajectory.

Still for the same $\indGPR$-th trajectory, a bootstrap approach is used to obtain a distribution for $\pfEstim{\sobolClosed{\indexSet}}(\outputHighDimensional)$ and $\pfEstim{\GSI_{\indexSet}}$.The estimations of the Sobol' index at each output dimension and the GSI can be computed in Line 13 without further GP sampling. This process is repeated for $\nBoot$ times (Lines 11-14) and notice that the subscripted index $(\cdot)_{\indexSet,\indGPR,\indBoot}$ refers to the estimations obtained by bootstraping ($\indBoot=1,\dots,\nBoot$) for a $\indGPR$-th trajectory.

Consequently, two distributions of estimated sensitivity indices are produced: the distribution of Sobol' indices of each output dimension $\pfEstim{\sobolClosed{\indexSet,\indGPR,\indBoot}}(\outputHighDimensional)$ with size $\left( \nOutputDimensions \times \NumberGPite \times \nBoot \right)$; and the distribution of generalized sensitivity indices $\pfEstim{\GSI_{\indexSet}}$ with size $\left( \NumberGPite \times \nBoot \right)$. Notice that for bootstrap index $\indBoot=1$ is related only to the metamodeling error, while the entire distributions are related to both metamodeling and PF estimation errors. The distributions allow us to obtain relevant statistics, such as median and percentiles (e.g. boxplots), and represent separately the variation due to the sources of errors, i.e. metamodeling and estimation.

\begin{algorithm}
\caption{ proposed by this work. }
\begin{algorithmic}[1]
\State \textbf{Input:} $\probDist_{\inputVectorRandom}$ (PDFs of input variables); $\nPF$ (size of PF input samples); $\NumberGPite$ (number of GPR trajectories); $\condGPVector$ (vector-valued GP); $\basisVec$ (basis components); $\GramMatrix$ (Gram matrix); $\nBoot$ (number of input samples).

\State Draw independently from $\probDist_{\inputVectorRandom}$ two $\nPF$-samples $\PFSampleA, \PFSampleB \in \domain^\nPF$;

\State Form PF input samples: $\pickFreezeVector =\PFSampleA$, $\pickFreezeVectorTilde = (\PFSampleA_{\indexSet}, \PFSampleB_{\exceptIndexSet})$;

\For{$\indGPR = 1$ to $\NumberGPite$} 
\Comment{GP trajectories loop}

    \State Sample a trajectory of $\condGPVector$ at the $2 \nPF$ locations of the vector $(\pickFreezeVector, \pickFreezeVectorTilde)$;

    \State Compute $\pfEstim{\sobolClosedDmatrix{\indexSet,\indGPR,1}}$ and $\pfEstim{\CovMatrix_{\indGPR,1}}$ using Definition \ref{def:pf_vector_valued} with $\PFoutputVec{\indPF} = \condGPVector(\pickFreezePoint)$ and $\PFoutputStarVec{\indPF} = \condGPVector(\pickFreezePointTilde)$;

    \For{$\indOutputs=1$ to $\nOutputDimensions$}
    \Comment{Output dimension loop}
        \State Compute $\pfEstim{\sobolClosed{\indexSet,\indGPR,1}}(\outputHighDimensional)$ using Eq. \ref{eq:reprojection};
    \EndFor

    \State{Compute $\pfEstim{\GSI_{\indexSet,\indGPR,1}}$ using Proposition \ref{prop:gsi_estimator_basis_derived}};

        \For{$\indBoot = 2$ to $\nBoot$} \Comment{Input resampling loop}

            \State Draw $\indPF_1, \dots, \indPF_\nPF$ independently from  $\mathcal{U}\{1, \dots, \nPF\}$;

            \State Compute $\pfEstim{\sobolClosedDmatrix{\indexSet,\indGPR,\indBoot}}$ and $\pfEstim{\CovMatrix_{\indGPR,\indBoot}}$ replacing the PF output samples $\PFoutputVec{1}, \PFoutputStarVec{1}, \dots, \PFoutputVec{\nPF}, \PFoutputStarVec{\nPF}$ by the resampled ones $\PFoutputVec{\indPF_1}, \PFoutputStarVec{\indPF_1}, \dots, \PFoutputVec{\indPF_\nPF}, \PFoutputStarVec{\indPF_\nPF}$;

            \State Redo Lines 7-10. Store the results in $\pfEstim{\sobolClosed{\indexSet,\indGPR,\indBoot}}(\outputHighDimensional)$ and $\pfEstim{\GSI_{\indexSet,\indGPR,\indBoot}}$. 
            
        \EndFor
\EndFor

\noindent 
\Return $\left[\pfEstim{\sobolClosed{\indexSet,\indGPR,\indBoot}}(\outputHighDimensional)\right]_{\substack{\indOutputs=1,\dots,\nOutputDimensions \\ \indGPR = 1,..., \NumberGPite \\ \indBoot=1,...,\nBoot}}$ and $\left[\pfEstim{\GSI_{\indexSet,j,\indBoot}}\right]_{\substack{\indGPR = 1,..., \NumberGPite \\ \indBoot=1,...,\nBoot}}$.

\end{algorithmic}
\label{proposed_algorithm}
\end{algorithm}

\begin{remark}[About practical implementation of GP sampling]
\label{remark:sampling_trajectories}
Considering computational implementation, GP trajectories can be generated in batches (outside the GP trajectories loop) or individually (within the GP trajectories loop). For the first option, a $\NumberGPite$-sized batch of trajectories of $\condGPVector$ is computed before Line 4 (start of GP trajectories loop). This approach is computationally faster because the Cholesky decomposition is performed only once. In contrast, the second option implies to perform Cholesky decomposition $\NumberGPite$ times, i.e. for each individual trajectory. The drawback of the ``batch'' option is that more computational storage is required, resulting in matrices of size $\nbasis \times \NumberGPite \times \nPF$. Both options are valid for the proposed algorithm.
\end{remark}

\begin{remark}[About computational cost]
    The standard approach to assess metamodeling and estimation errors in functional GSA is to apply \citet{legratiet2014} algorithm over all $\nOutputDimensions$ output dimensions \citep{zhao2021, ye2022}, i.e. a dimension-wise (DW) approach. Given pre-computed GP trajectories, the cost to estimate a single sensitivity map (SM) $\pfEstim{\sobolClosed{\indexSet}}(\outputHighDimensional)$ ($\indOutputs=1,\dots,\nOutputDimensions$) is $\mathrm{Cost_{DW}} = 4(\nbasis+2)\nPF\nOutputDimensions$, which is multiplied by $\NumberGPite$ and $\nBoot$ to obtain a distribution of SM. On the other hand, considering the novel basis-derived (BD) algorithm presented in this work, given the same pre-calculated GPR predictions, a single SM is computed with the cost of $\mathrm{Cost_{BD}} = 2\nbasis(3\nbasis+1)\nPF + 3\nbasis(\nbasis+1)\nOutputDimensions$, also scaled by $\NumberGPite$ and $\nBoot$ to obtain a distribution of SM. The relative cost between both methods is exactly the relative cost given by Proposition 3 of \citet{Sao2025} is lower-bounded by $H(2\nPF, \nOutputDimensions) / (3 \nbasis)$, where $H$ is the harmonic mean. It proves the efficiency of the novel algorithm over the standard approach to assess estimation and modeling errors.
   
\end{remark}

The next section shows the application of Algorithm \ref{proposed_algorithm} to an analytical case (Campbell2D function) and a data-driven application (idealized and gradual dam-break flow of non-Newtonian fluid).

\section{Applications}
\label{sec:results}

In this section, we explore two cases: first, the Campbell2D function (see e.g. \citep{Marrel2011, Perrin2021, Sao2025}), and the idealized and gradual dam-break flow of non-newtonian fluids \cite{Sao2025}. We use principal component analysis (PCA) as linear basis expansion technique. For the GPR, the standard Matérn 5/2 kernel is employed and the maximum likelihood estimator is used to estimate the hyperparameters. The accuracy of the metamodels' prediction is evaluated by the Nash-Sutcliffe efficiency, also called $Q^2$ metric, described in \citep{Marrel2011, Sao2025}. Regarding PF estimation, we employ the Janon-Monod estimator to calculate the first-order indices \citep{Janon2014} and the Jansen estimator to calculate the total indices \citep{Jansen1999}, because of their desirable statistical properties (asymptotical efficiency and consistency) \citep{Janon2014, DaVeiga2021}.

For both cases, we illustrate the ability of Algorithm \ref{proposed_algorithm} to produce distributions of sensitivity indices, representing metamodeling-only and overall (metamodeling and PF estimation) errors over Sobol' indices and GSI. The errors are represented by using boxplots\footnote{The boxplots here are defined by the median value (Q2), the interquartile (Q1 and Q3) range within the box, whiskers that span $1.5$ times the interquartile range and outlier points outside the whiskers.}. Two approaches may be used:

\begin{enumerate}
    \item The first approach aims to capture the metamodeling-only errors: given a DoE of size $\nDoE$, DoE subsets of different sizes are used in Algorithm \ref{proposed_algorithm} while keeping fixed the remaining parameters, i.e $\nPF, \nBoot,\NumberGPite$.

    \item The second approach aims to capture the overall errors: different sizes of PF samples $\nPF$ are used in Algorithm \ref{proposed_algorithm} while keeping fixed the remaining parameters, i.e $\nDoE, \nBoot,\NumberGPite$.
\end{enumerate}

The isolated PF-estimation error can be computed straightforwardly as the difference between both approaches. The parameters $\nBoot = 50$ and $\NumberGPite = 200$ are considered to be fixed and were subjected to tests for both cases. No significant difference was found by using more GP trajectories or more bootstrap repetitions, as shown in Appendix \ref{app:nGPR_nBoots}.

\subsection{Campbell2D function}
\label{subsec:campbell2d_results}

The Campbell2D function \citep{Marrel2011, Perrin2021, Sao2025} is originally defined as a function from $[-1, 5]^8$ to  $[-90,90]^2$. Here, we consider a discretized version where, for each input value $\inputVectorDet$, the output $f(\inputVectorDet)$ is a vector obtained by restricting the continuous values on a uniform grid of size $\nOutputDimensions = 64 \times 64$.

\begin{equation}
\label{eq:campbell2D}
\begin{array}{@{}r c c c@{}}
\deterministicModel: &
 \left[-1,5\right]^8
& \longrightarrow &
  \R^\nOutputDimensions \\[2pt]
&
  \inputVectorDet=\left(x_1,\ldots,x_8\right)
& \longmapsto &
  f(\inputVectorDet) = \left(\outputHighDimensional\!\left(\inputVectorDet\right) \right)_{\indOutputs = 1, \dots, \nOutputDimensions}
\end{array}
\end{equation}

A DoE of size $\nDoE = 200$ was generated by using the Latin-hypercube sampling (LHS) technique with uniform PDFs for the input variables. For all subsets of $\nDoE$, $\nbasis = 10$ was sufficient to explain more than $99\%$ of the variance. In general, the metamodels' predictions were acceptable, as indicated by the $Q^2$ maps shown in Appendix \ref{app:q2}, which displays the 5th, 50th and 95th percentiles of $Q^2$ from a sample containing $\NumberGPite=100$ GP trajectories. Since the predictions are not perfect (i.e. $Q^2 = 1)$, the metamodeling procedure is still a source of errors over the sensitivity indices. However, lower values of $Q^2$ are associated with more errors, which is reflected to the sensitivity indices, as observed in \citep{legratiet2014}. The metamodeling error is discussed in the sequel.

Figure \ref{fig:DOE_variation} shows the comparison of first-order GSI boxplots in function of $\nDoE$ subsets, for all input variables $(x_1, \dots, x_8)$. Here, we limit the presentation to first-order GSI; total GSI results are available in Appendix \ref{app:total_order}. By observing the metamodel-only data, it is noticeable that the coverage of the whiskers reduces as the size of DoE increases. It means that the improvement of predictive capability of metamodels makes the GSI to vary less, leading to less errors due to metamodeling. As the size of DoE increases, overall errors vary exclusively due to the reduction in metamodeling errors.

\begin{figure}[h]
    \centering
    \includegraphics[width=0.8\linewidth]{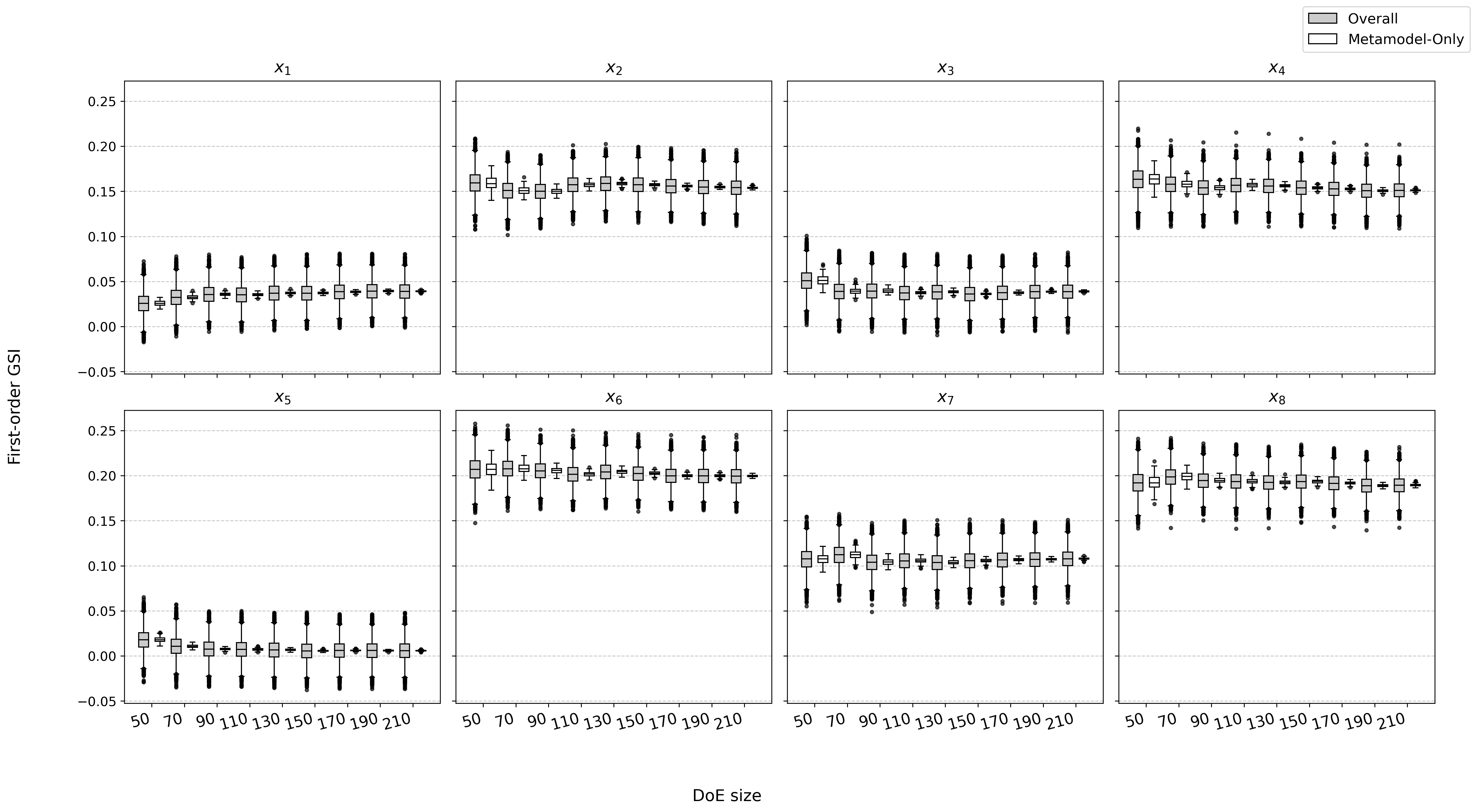}
    \caption{Campbell2D case: boxplots of first-order GSI of all input variables ($x_1,\dots, x_8$) in function of $\nDoE$. The following parameters were used: $\nbasis = 10$, $\NumberGPite=200$, $\nBoot=50$ and $\nPF = 5000$.}
    \label{fig:DOE_variation}
\end{figure}

Then, we evaluate the variation of first-order GSI data according to the size of PF input samples $\nPF$. Figure \ref{fig:PF_variation} shows the comparison of first-order GSI boxplots (overall and metamodel-only data) in function of $\nPF$ for all input variables ($x_1,\dots, x_8$). In this case, we note that overall errors decrease as $\nPF$ increases, since the estimation is more accurate with more PF sample points. Therefore, we note that the proposed algorithm provides distributions of sensitivity indices, allowing the estimation of metamodeling and estimation errors separately.

\begin{figure}[h]
    \centering
    \includegraphics[width=0.9\linewidth]{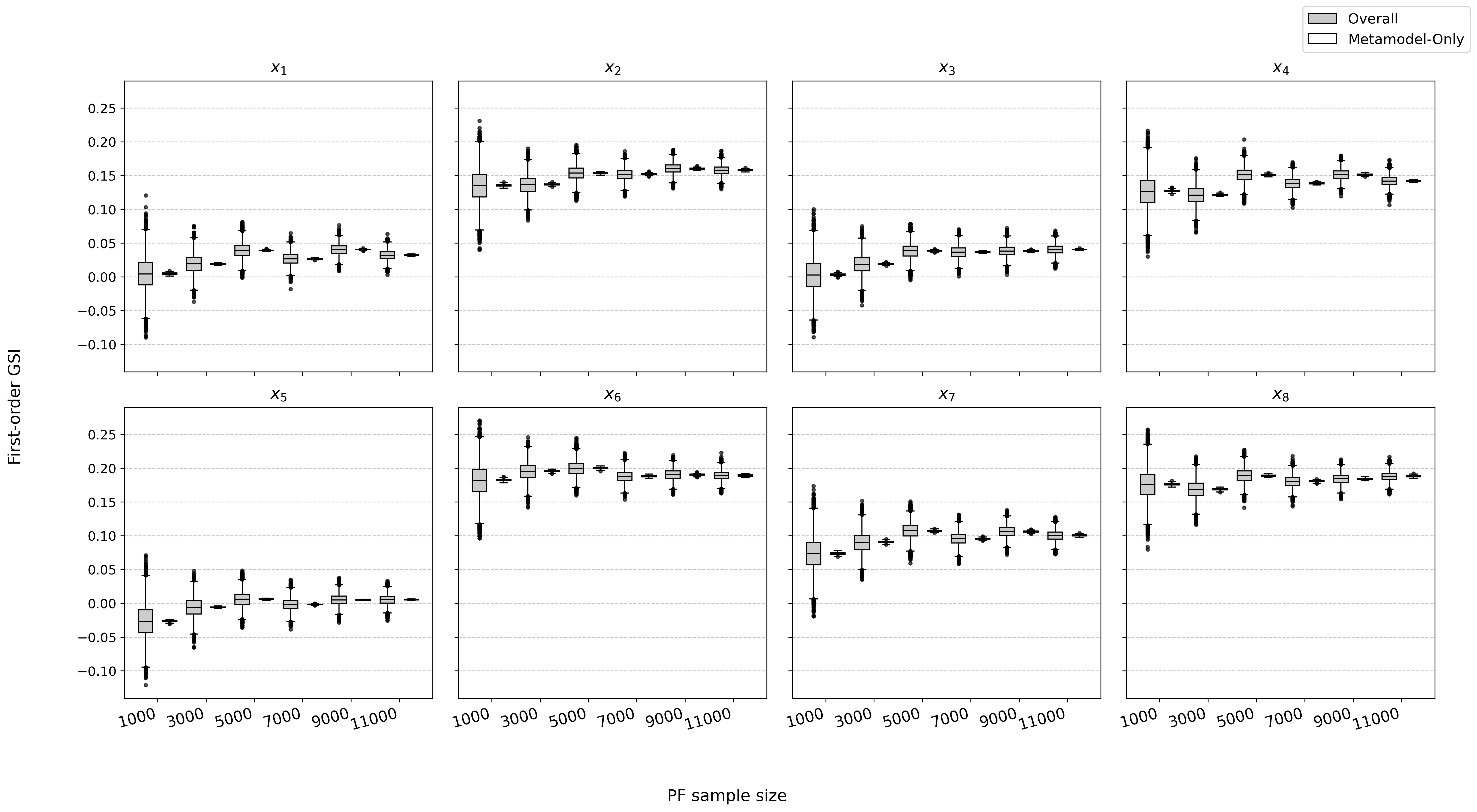}
    \caption{Campbell2D case: boxplots of first-order GSI of all input variables ($x_1,\dots, x_8$) in function of $\nPF$. The following parameters were used: $\nbasis = 10$, $\NumberGPite=200$, $\nBoot=50$ and $\nDoE = 200$.}
    \label{fig:PF_variation}
\end{figure}  

In terms of computational cost, both algorithms were tested over one dimension by considering $\nOutputDimensions=4096$, $\nPF=1000$, $\NumberGPite=10$ and $\nBoot=10$. The code run in an Apple M3 with 8 cores. Given pre-calculated GPR predictions, Algorithm \ref{algo:leGratiet} took approximately 35 seconds to complete the nested loop, whereas Algorithm \ref{proposed_algorithm} took approximately 2 seconds. It means that the Algorithm proposed in this work is about 20 times more efficient, which is similar to the observed efficiency reported in \citep{Sao2025}. For finer results, considering $\nvar=8$, $\nPF=5000$, $\NumberGPite=200$, $\nBoot=50$ and pre-calculated GPR predictions, the entire application of Algorithm \ref{proposed_algorithm} took approximately 25 minutes to complete. Therefore, the computational gain brought by the proposed algorithm is relevant, since the computational cost is reduced from several hours to dozens of minutes. On the other hand, a limitation is the fact that sampling random trajectories from the GPR is costly, because it performs Cholesky decomposition, whose complexity is $\mathcal{O}((2\nPF)^3)$, where $\nPF$ is large. \citet{legratiet2014} elaborates on efficient strategies to sample random trajectories.

The next section explores a more complex application from non-Newtonian fluid mechanics, where functional GSA is employed to study the idealized case of a gradual dam-break flow of non-Newtonian fluids. The Sobol' indices in the original output space are also evaluated, along with the $\GSI$. 

\subsection{Gradual dam-break flow of non-Newtonian fluid}
\label{subsec:application}

A gradual dam-break flow of non-Newtonian fluid consists of a known volume of non-Newtonian material inside a reservoir delimited by the walls and a gate, which is lifted with a finite velocity and the material flows downstream a horizontal plane or channel (Fig. \ref{Fig.schematic}) \citep{Ancey2009}. This application is relevant to many areas of engineering in the context of evaluating rheological properties and physical characteristics of non-Newtonian materials, such as fresh concrete, mud, gels, food, etc \cite{Pashias1996, Roussel2005, Balmforth2007, Gao2015, Pereira2022}.

\begin{figure}[h] 
\centering
\includegraphics[width=0.7\textwidth]{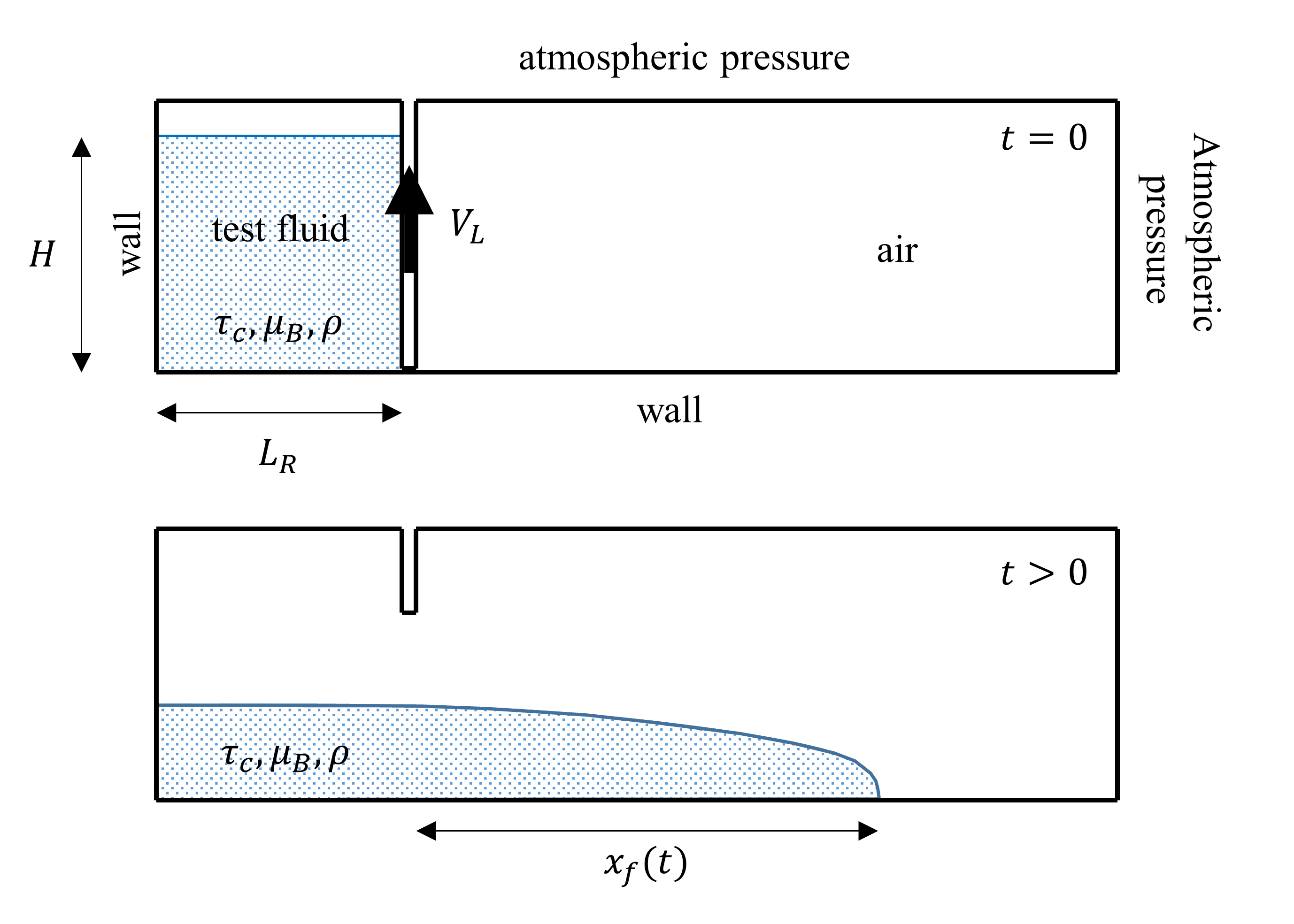}
\caption{Schematic of the idealized gradual dam-break case with boundary conditions and input variables.}\label{Fig.schematic}
\end{figure}

The settings of this case are sufficiently described in \cite{Sao2025}. Briefly, a high-fidelity deterministic code (ANSYS Fluent) is used to simulate the scenarios, where five input variables were considered to be random and uniformly distributed variables: rheological properties yield stress $\tau_c \in \mathcal{U}(0.1,\ 200.0) \ \text{Pa}$ and plastic viscosity $\mu_B \in \mathcal{U}(0.01,\ 15.00) \ \text{Pa.s}$, density $\rho \in \mathcal{U}(1000,\ 2650)$ $\text{Kg.m\textsuperscript{-3}}$, initial height $H \in \mathcal{U}(0.2, \ 1.0) \ \text{m}$ and lifting velocity $V_L \in \mathcal{U}(0.01, \ 1.00)$ $\text{m.s\textsuperscript{-1}}$. The quantity of interest is the position of the wavefront over time ($x_f \times t$) discretized in $\nOutputDimensions=5000$ output dimensions and the GSA provides sensitivity indices over the entire time series, which allows to have insights about the phenomena considering lifting dynamics, fluid characteristics and initial geometry. Due to the high-dimensionality of the output, which is a time series, functional-GSA is applied.

A pure Monte-Carlo approach was used to generate an initial DoE of size $130$. Then, the DoE was enriched with a LHS sample, producing a final sample of size $226$. The sample size was limited by the computational cost of each simulation. In this case, we used $\nbasis = 10$ principal components for all DoE subsets. A similar discussion about the accuracy of metamodels using the $Q^2$ metric can be made here (see Subsec. \ref{subsec:campbell2d_results}). Appendix \ref{app:q2} shows the $Q^2$ series over time, containing the 5th, 50th and 95th percentiles, where $200$ training points and $26$ validation points are used, with $\NumberGPite=100$ random GP trajectories. It is clear that the $Q^2$ varies according to the trajectory and, thus, metamodeling error is present because the predictions are not perfect.

Figure \ref{fig:GSI_DOE_application} shows the first-order GSI results in form of boxplots for all input variables in function of the DoE size, with a fixed PF input samples size of $\nPF=5000$. Here, we limit the presentation to first-order GSI and Sobol' indices; total indices results are available in Appendix \ref{app:total_order}. The DoE was constructed based on $50$ initial random points, which were enriched by steps of $20$ until the full DoE of size $226$. For small DoEs, the metamodel-only error dominates the overall error. We can see in Fig. \ref{fig:q2_dambreak} that the accuracy of the metamodels vary with the trajectory, which is linked to metamodeling errors. Then, as the DoE size increases, the intervals of metamodel-only boxplots decrease. This is expected and observed in Subsection \ref{subsec:campbell2d_results}, since by increasing the DoE size, the metamodels' quality increases and, therefore, there are less errors related to the metamodeling procedure. 

\begin{figure}[h]
    \centering
    \includegraphics[width=0.8\linewidth]{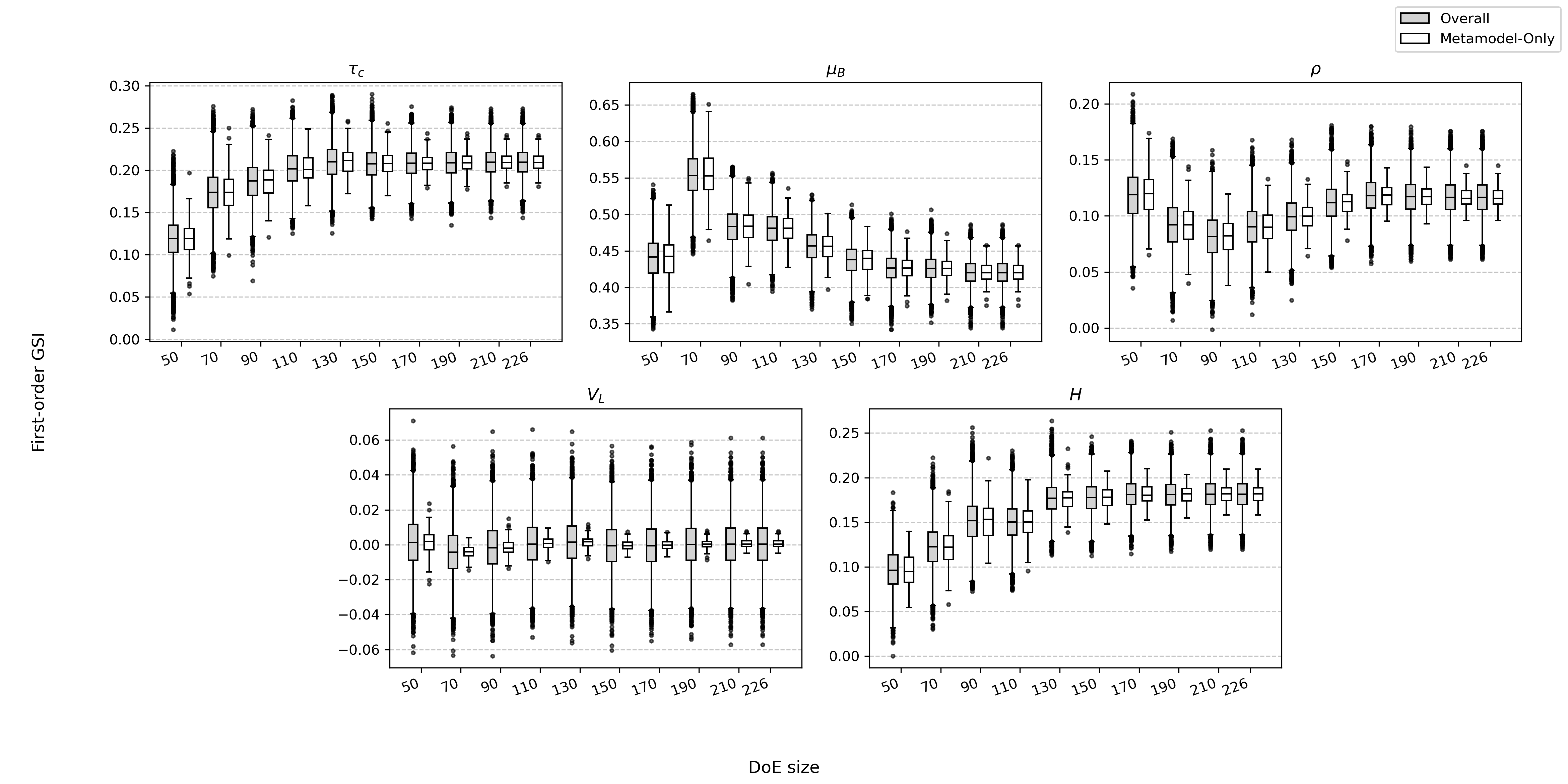}
    \caption{Dam-break case: first-order GSI boxplots of all input variables in function of DoE size $\nDoE$, with number of PF points $\nPF=5000$, number of bootstrap repetitions $\nBoot=50$ and number of GP trajectories $\NumberGPite=200$.}
    \label{fig:GSI_DOE_application}
\end{figure}

Figure \ref{fig:GSI_PF_application} shows the first-order GSI results in form of boxplots for all input variables in function of the PF input samples size, with a fixed DoE size $\nDoE=226$ and algorithm parameters $\NumberGPite=200$ and $\nBoot=50$. The PF sample size ranges from $1000$ to $11000$ points. By increasing the PF sample size, we should notice two points. First, the metamodel-only boxplots remain approximately constant even with the variation of PF points. Second, the overall boxplots' intervals decrease in function of the increasing PF sample size, since the overall error is also composed by the PF estimation error. 

\begin{figure}[h]
    \centering
    \includegraphics[width=0.8\linewidth]{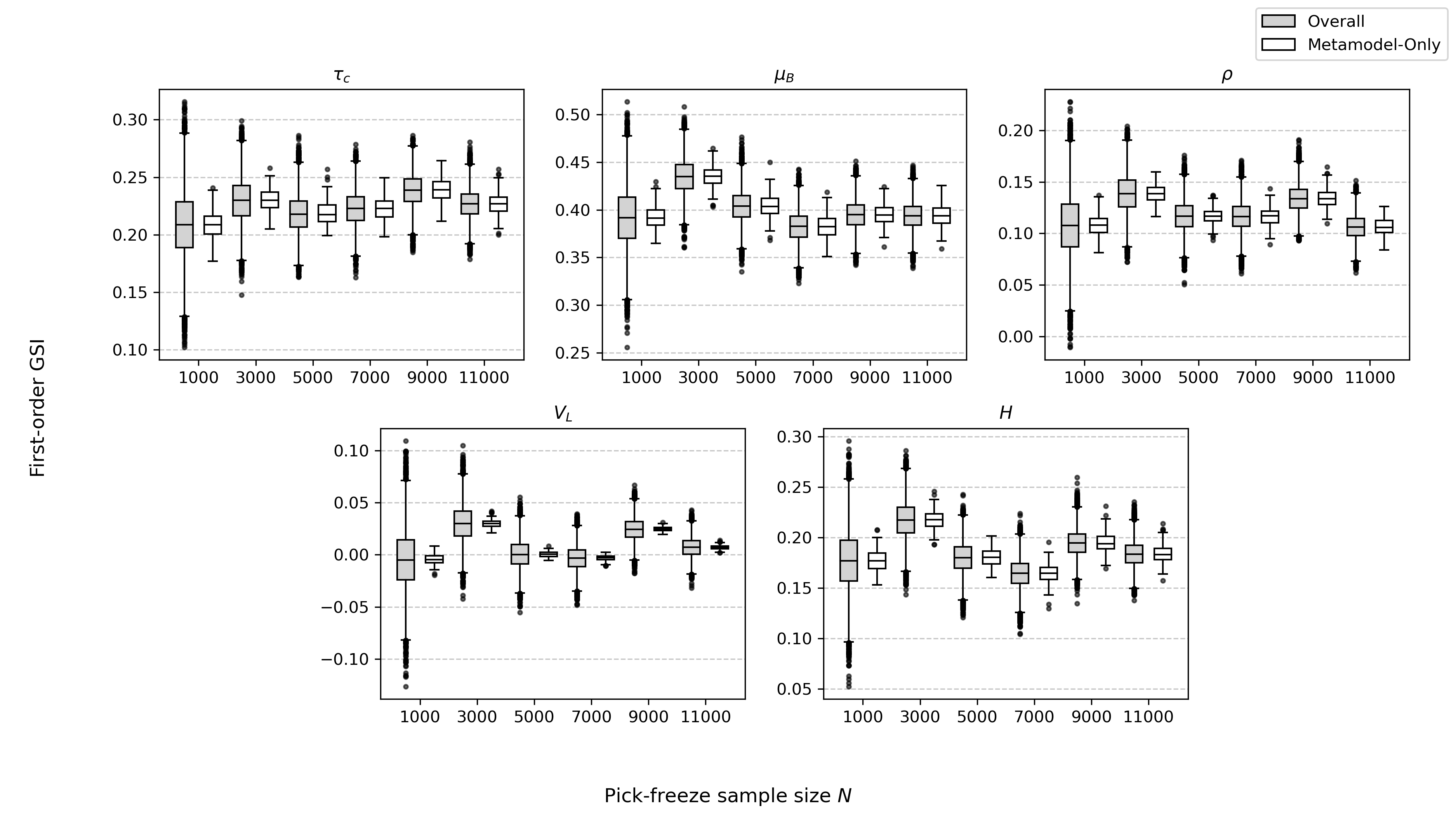}
    \caption{Dam-break case: first-order GSI boxplots of all input variables in function of the PF sample size $\nPF$, with DoE size $\nDoE=226$, number of bootstrap repetitions $\nBoot=50$ and number of GP trajectories $\NumberGPite=200$.}
    \label{fig:GSI_PF_application}
\end{figure}

Figure \ref{fig:Sobol_first_order_confidence_intervals_application} shows the first-order Sobol' indices in function of the output dimensions (time) for each input variable. The confidence intervals (metamodel-only and overall) are reported around the median values, through the 5th and 95th percentiles). A brief physical interpretation is that inertia-related input variables, such as the lifting velocity $V_L$ and initial height $H$ (related to the initial stored potential energy), are more important in the beginning of the flow. As the flow develops over time, the pressure driving force diminishes and resistance-related parameters become the most influential terms, i.e. the rheological parameters yield stress $\tau_c$ and plastic viscosity $\mu_B$.

\begin{figure}[h]
    \centering
    \includegraphics[width=0.8\linewidth]{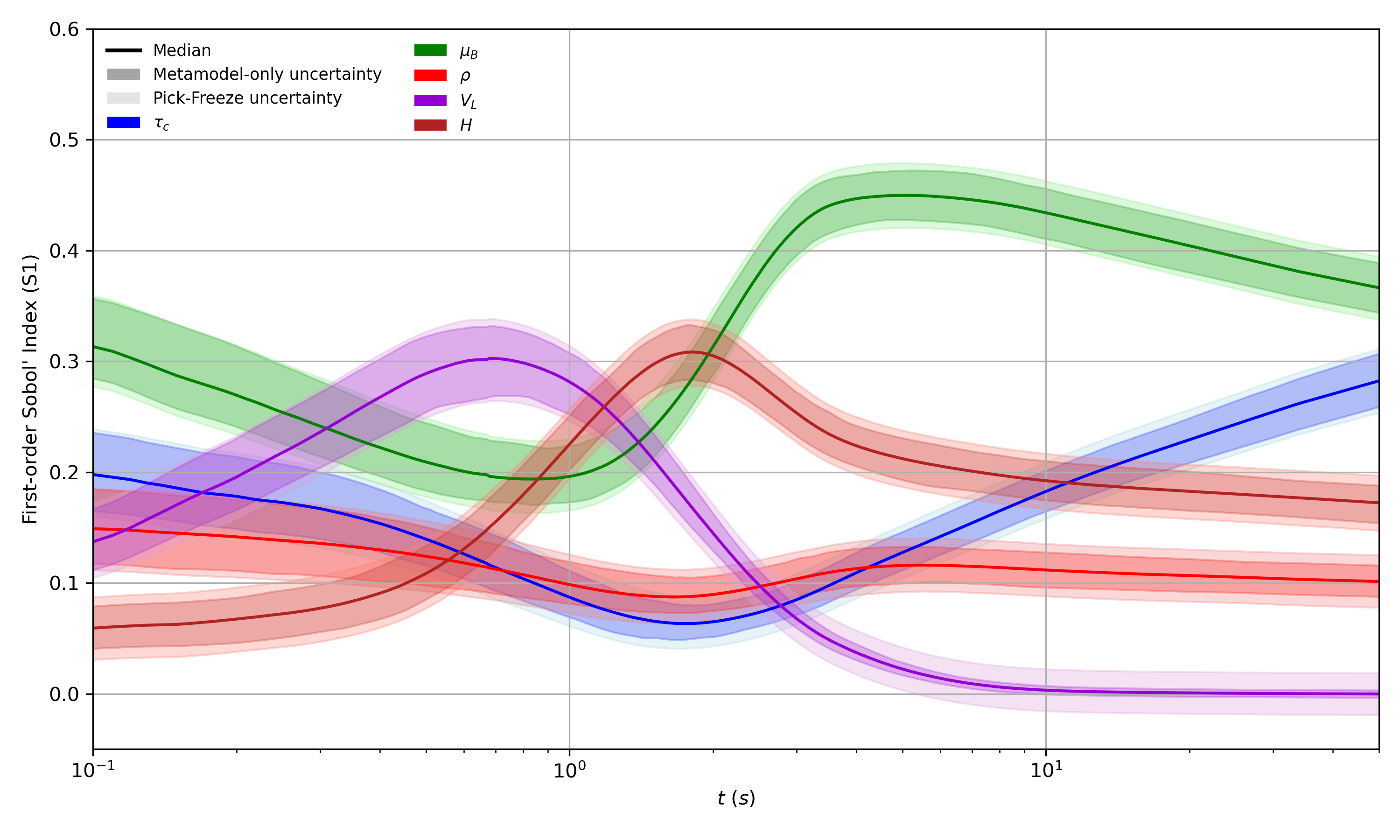}
    \caption{Dam-break case: median, 5th and 95th percentiles of first-order Sobol' indices over output dimensions (time). Fixed parameters: DoE size $\nDoE = 226$, PF sample size of $\nPF=11000$, number of bootstrap repetitions $\nBoot=50$ and number of GP trajectories $\NumberGPite=200$.}
\label{fig:Sobol_first_order_confidence_intervals_application}
\end{figure}

The medians show a good agreement with the kriging mean prediction, as shown in \cite{Sao2025}. The confidence intervals are separated into metamodel-related and PF-estimation errors, where the metamodeling is the main source of error. In fact, Fig. \ref{fig:relative_contribution_application} shows the contribution of both error sources over the output dimensions and confirms this observation, with exception to $V_L$ indices. For later times, the PF estimation turns to be the main source of error, since the indices are close to zero. A different estimator other than the Janon-Monod estimator may be more suitable for this case, however this is not the goal of this analysis. 

\begin{figure}[h]
    \centering
    \includegraphics[width=0.8\linewidth]{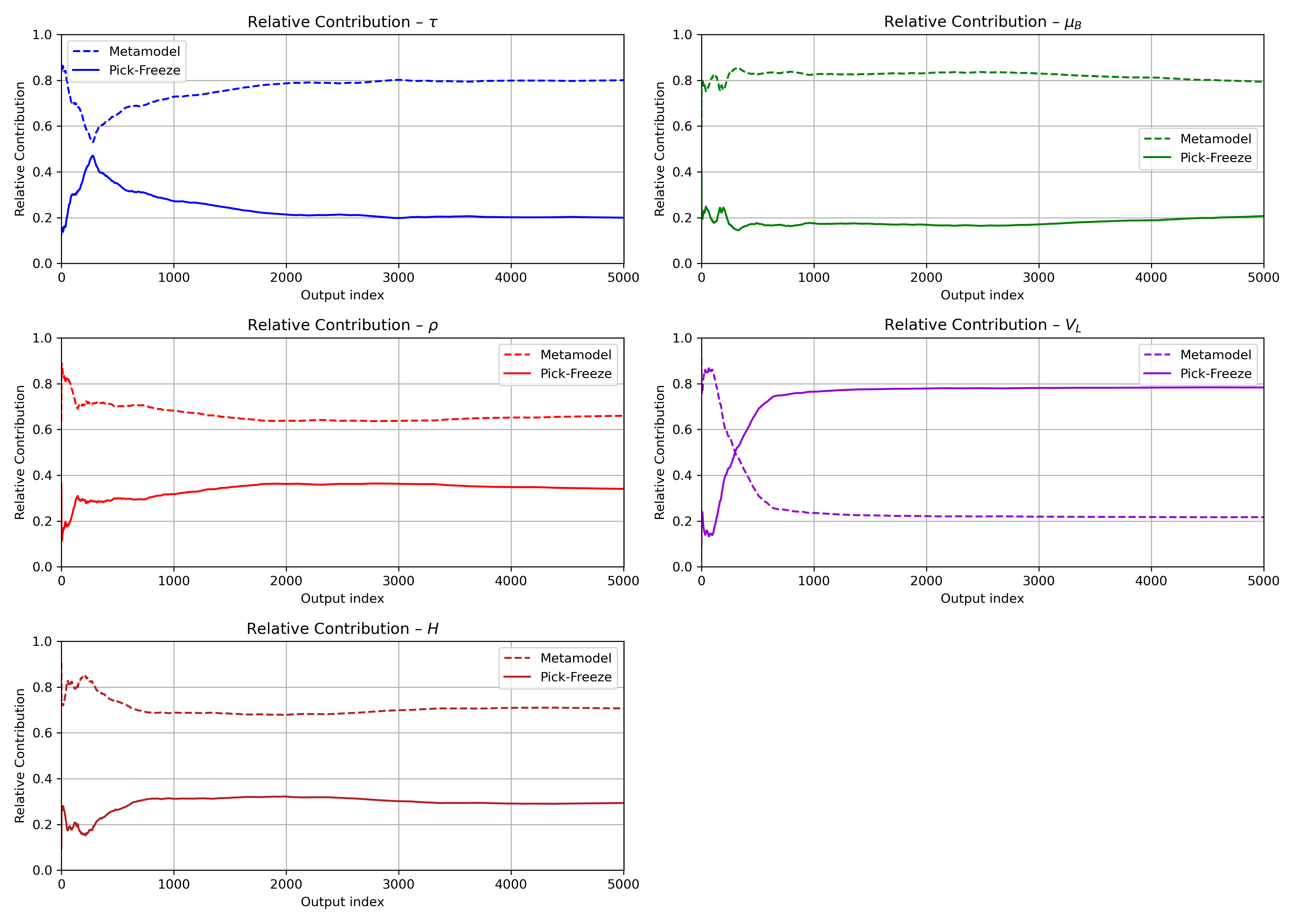}
    \caption{Dam-break case: relative contribution of metamodeling-related errors ($\nDoE=226$) and PF estimation errors ($\nPF=11000$) over output dimensions (time).}
    \label{fig:relative_contribution_application}
\end{figure}

This behavior is expected, given the first-order GSI results in Fig. \ref{fig:GSI_PF_application}, where the metamodel-only boxplot ranges are comparable to the overall boxplot ranges. Therefore, in this case, the PF estimation error is low due to the size of PF input samples ($\nPF=11000$). 

In terms of computational cost, the same test conducted with Campbell2D function (Subsection \ref{subsec:campbell2d_results}) was performed here, by considering $\nOutputDimensions = 5001$, $\nPF=1000$, $\NumberGPite=10$ and $\nBoot=10$. The code run in an Apple M3 with 8 cores. Given pre-calculated GPR predictions, Algorithm \ref{algo:leGratiet} took approximately 30 seconds to complete the nested loop, whereas Algorithm \ref{proposed_algorithm} took approximately 2 seconds. The results were similar to the reported in Subsection (\ref{subsec:campbell2d_results}). For finer results, considering $\nvar=5$, $\nPF=11000$, $\NumberGPite=200$, $\nBoot=50$ and pre-calculated GPR predictions, the entire application of Algorithm \ref{proposed_algorithm} took approximately 32 minutes to complete. As observed in Subsection \ref{subsec:campbell2d_results}, it is clear that the novel algorithm allows a faster computation, reducing the computational time from several hours to dozens of minutes.

\section{Conclusion}
\label{sec:conclusion}

In this work, we introduce a computationally efficient algorithm capable of producing distributions of sensitivity indices in a functional GSA context. These distributions allow the representation of errors inherent to the functional GSA procedure, such as metamodeling and estimation errors. Here, specifically, we use the Gaussian process regression (GPR) technique since it can predict multiple trajectories for the same input sample.

To this end, the algorithm requires functional data to be expanded into a functional basis and metamodeling of the few truncated basis coefficients of the expansion. Then, for a single GPR trajectory, a vector-valued \textit{pick-freeze} (PF) estimator is used to estimate the Sobol' indices and generalized sensitivity index (GSI). A bootstrap strategy is applied in the PF estimation of the sensitivity indices, which is computationally efficient. This procedure is applied to multiple GP trajectories and two distributions of sensitivity indices can be obtained: one related only to the metamodeling error and one related to the overall errors. This framework is an adaptation of \citet{legratiet2014} algorithm, originally developed for scalar outputs, for functional GSA.

The framework was applied to two test cases: the analytical Campbell2D function and the data-driven case of a gradual and idealized dam-break flow of non-Newtonian fluid. The errors were visualized by using boxplots and the isolated effect of the DoE size and the PF sample sizes were examinated. For a fixed PF sample size, the metamodeling-only error reduces in function of the DoE's enrichment. In a similar way, for a fixed DoE size, the overall error reduces in function of the PF sample size, while the metamodeling-only error remained approximately constant. The results showed that the methodology was able to provide larger errors for poorer DoEs and for smaller PF sample sizes, which is the expected behavior since these conditions are prone to more error. 
Finally, in terms of computational efficiency, the proposed algorithm is more efficient than applying the \citet{legratiet2014} algorithm separately for each output dimension. The reason is that the number of basis components is significantly smaller than the number of output dimensions, resulting in fewer mathematical operations \citep{Sao2025}. The results on computational performance show that the proposed algorithm is about $15$ times faster.

We will mention two perspectives for this work. 
First, the proposed algorithm automatically gives the (joint) distribution of the \textit{vector} of Sobol' indices estimators at all pixels of the sensitivity map. Although only the marginal distribution has been illustrated here, this property could be used to assess the model error of an estimator involving all the pixels (or time-steps) such as the maximum of the Sobol' indices PF estimators over the space or time period. 
Secondly, the algorithm was implemented in the standard setting where each principal component is modeled independently by a GP. It can be extended to more advanced models where the vector of principal components is modeled jointly by a multioutput GP (see e.g. \cite{Alvarez_Lawrence_kernelsVectorValued}), since the algorithm only requires a simulation procedure for conditional paths. 

\bibliographystyle{abbrvnat}   
\bibliography{cas-refs}
\appendix

\section{Analysis of Remark \ref{remark:covariance_estimation}, number of GP trajectories and bootstrap repetitions}\label{app:nGPR_nBoots}

Regarding Remark \ref{remark:covariance_estimation}, the overall covariance can be calculated by two approaches: (i) by considering a fixed output data from the DoE (fixed covariance); or (ii) by computing the empirical overall covariance at each trajectory and at each bootstrap repetition (empirical covariance). Fig. \ref{fig:covariance_estimations_GSI} shows that both approaches show negligible differences for the evaluation of first-order GSI. This subsection also shows that the number of GP trajectories and of bootstrap repetitions taken by both test-cases $(\NumberGPite, \nBoot)=(200, 50)$ is sufficient. Figures \ref{fig:number_boots_analysis_campbell} and \ref{fig:number_GPR_trajectories_analysis_campbell} refer to the Campbell2D function (Subsection \ref{subsec:campbell2d_results}) and Figs. \ref{fig:number_boots_analysis_practical} and \ref{fig:number_GPR_trajectories_analysis_practical} refer to the practical case (Subsection \ref{subsec:application}).

\begin{figure}[H]
    \centering    \includegraphics[width=0.8\linewidth]{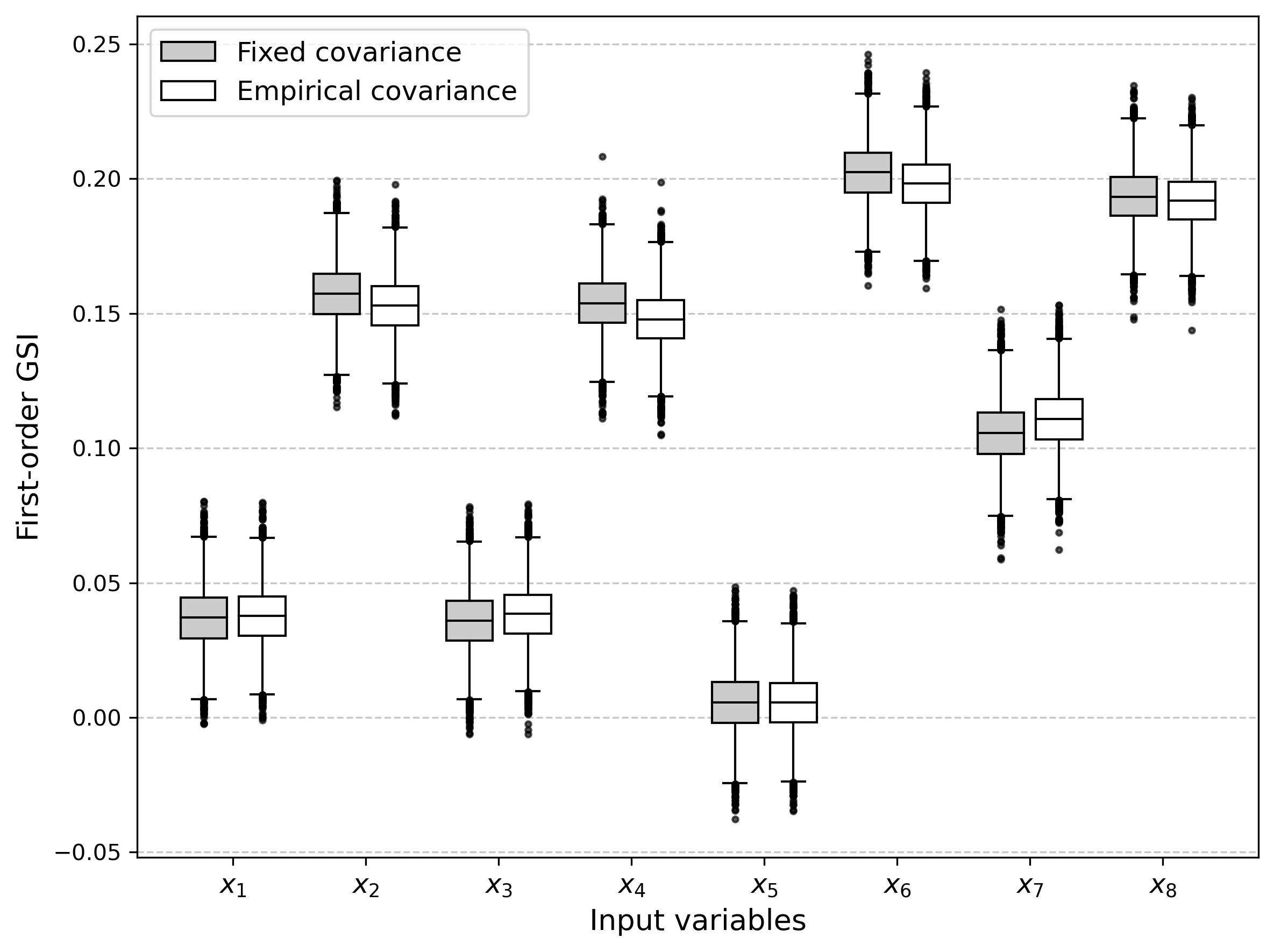}
    \caption{Campbell2D case: comparison between two approaches to compute overall covariance (Remark \ref{remark:covariance_estimation}), considering first-order GSI of all input variables. In the fixed covariance estimation, the overall covariance is taken from DoE output data; in the empirical overall covariance estimation (Eq. \ref{eq:gsi_estimator}), the overall covariance is computed with the results from the PF input samples. We consider $\nDoE = 150$, $\nPF=5000$, $\nBoot=50$ and $\NumberGPite=200$.}
\label{fig:covariance_estimations_GSI}
\end{figure}

\begin{figure}[H]
  \centering
  \begin{subfigure}[t]{0.48\textwidth}
    \centering
    \includegraphics[width=\linewidth]{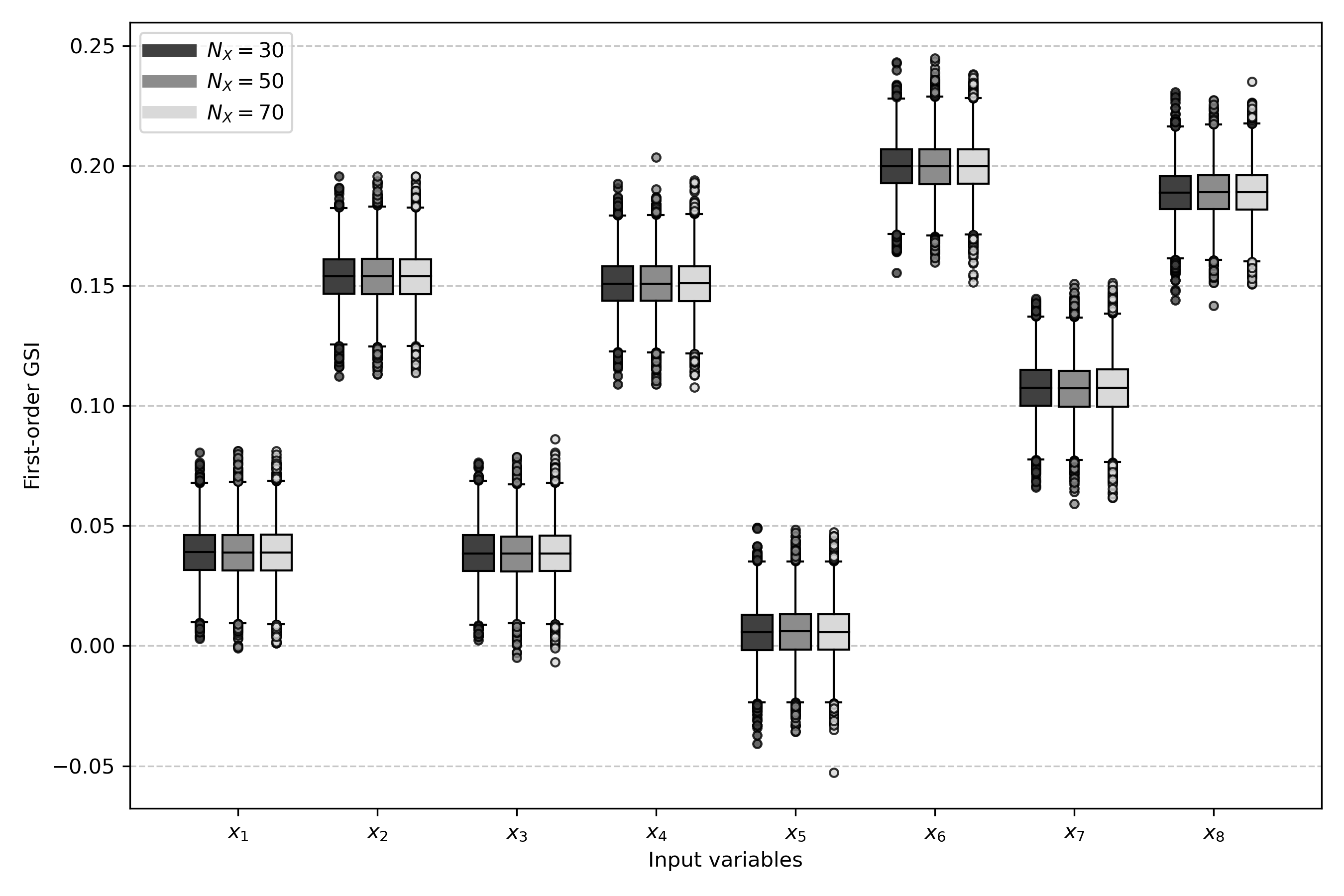}
    \caption{Comparison between bootstrap iterations $\nBoot=(30, 50, 70)$ using the first-order GSI of each input variable. Size of PF input samples $\nPF=5000$ and number of GP trajectories $\NumberGPite=200$ were fixed.}
    \label{fig:number_boots_analysis_campbell}
  \end{subfigure}
  \hfill
  \begin{subfigure}[t]{0.48\textwidth}
    \centering
    \includegraphics[width=\linewidth]{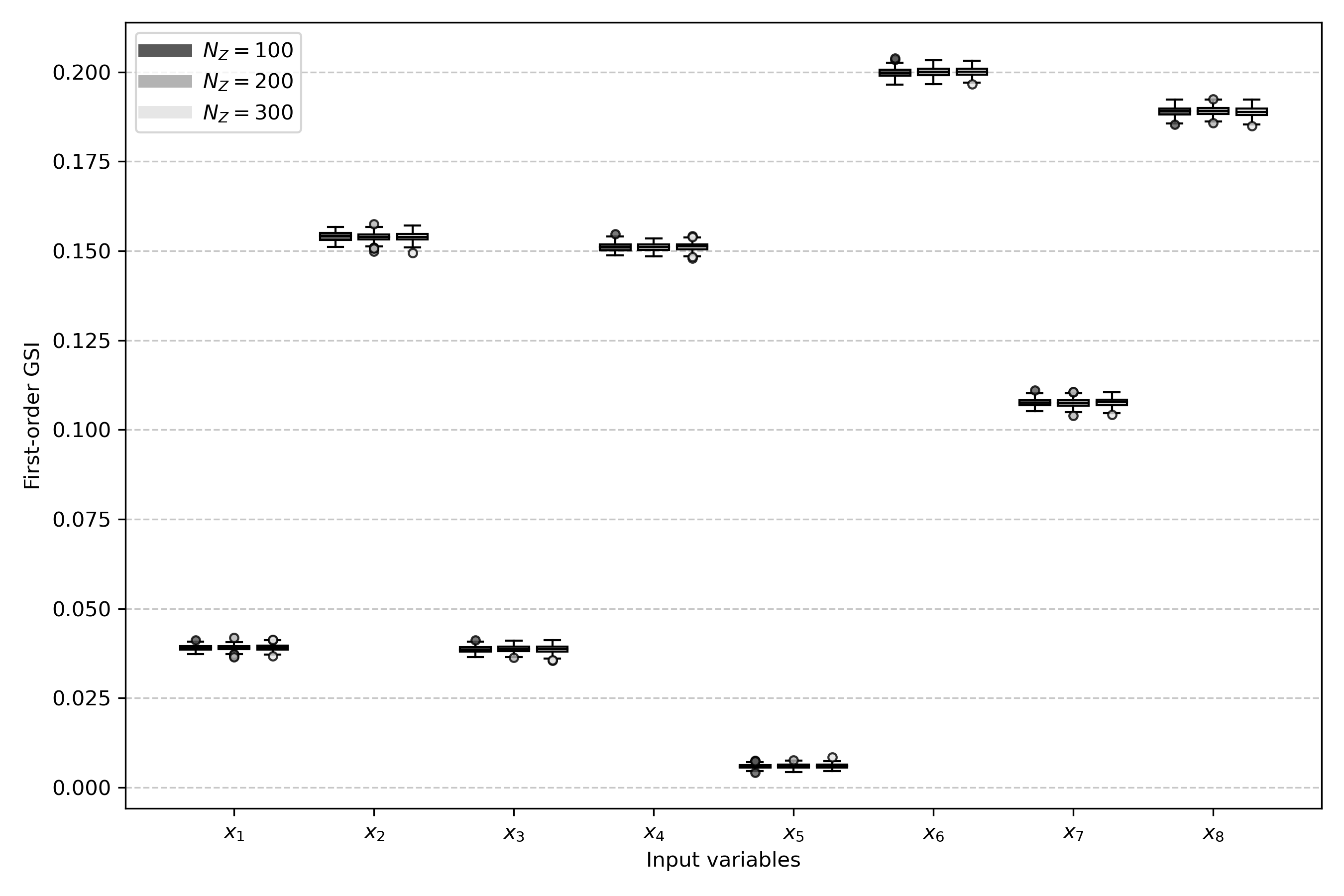}
    \caption{Comparison between numbers of GP trajectories $\NumberGPite=(100, 200, 300)$ using the first-order GSI of each input variable. Size of PF input samples $\nPF=5000$ and number of bootstrap repetitions $\nBoot=50$ were fixed.}
    \label{fig:number_GPR_trajectories_analysis_campbell}
  \end{subfigure}
  \caption{Analysis of the number of GP trajectories and bootstrap repetitions (Campbell2D function).}
\end{figure}

\begin{figure}[htbp]
  \centering
  \begin{subfigure}[t]{0.48\textwidth}
    \centering
    \includegraphics[width=\linewidth]{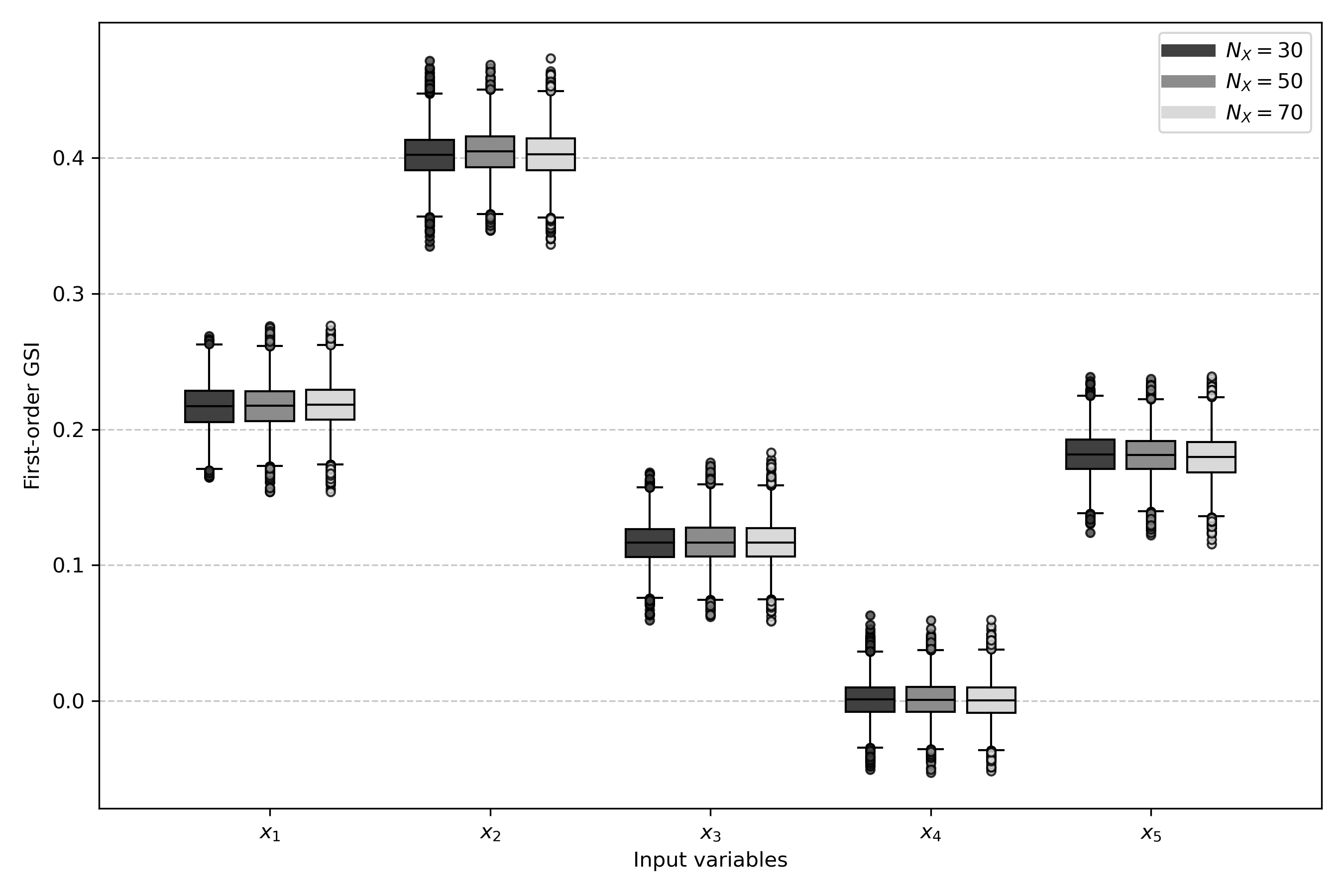}
    \caption{Comparison between number of bootstrap iterations $\nBoot=(30, 50, 70)$ using the first-order GSI of each input variable. Size of PF input samples $\nPF=5000$ and number of GP trajectories $\NumberGPite=200$ were fixed.}
    \label{fig:number_boots_analysis_practical}
  \end{subfigure}
  \hfill
  \begin{subfigure}[t]{0.48\textwidth}
    \centering
    \includegraphics[width=\linewidth]{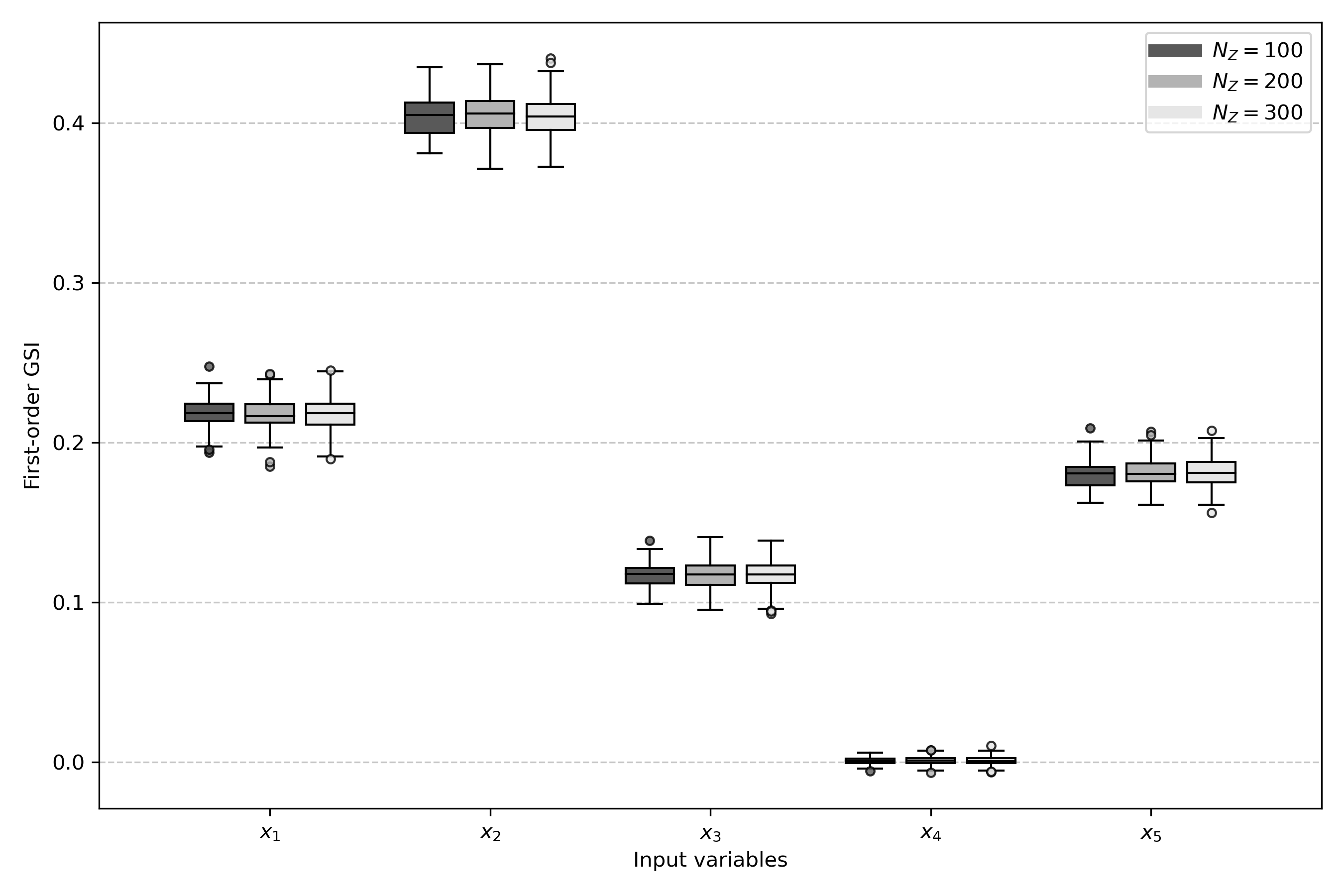}
    \caption{Comparison between number of GP trajectories $\NumberGPite=(100, 200, 300)$ using the first-order GSI of each input variable. Size of PF input samples $\nPF=5000$ and number of bootstrap repetitions $\nBoot=50$ were fixed.}
    \label{fig:number_GPR_trajectories_analysis_practical}
  \end{subfigure}
  \caption{Analysis of the number of GP trajectories and bootstrap repetitions (gradual dam-break of non-Newtonian fluid flow case).}
\end{figure}

\section{Total indices results}\label{app:total_order}

Algorithm \ref{proposed_algorithm} is adaptable to evaluate higher-order indices. This appendix shows results of the total GSI (for both cases) and total Sobol' indices at all output dimensions for the dam-break application, described in Subsection \ref{subsec:application}. 

\subsection{Campbell 2D function}

Considering the Campbell2D function, Figures \ref{fig:total_GSI_campbell_DOE} and \ref{fig:total_GSI_campbell_PF} show the results of total GSI for different DoE subset sizes and PF sample sizes, respectively.

\begin{figure}[h]
    \centering    \includegraphics[width=0.8\linewidth]{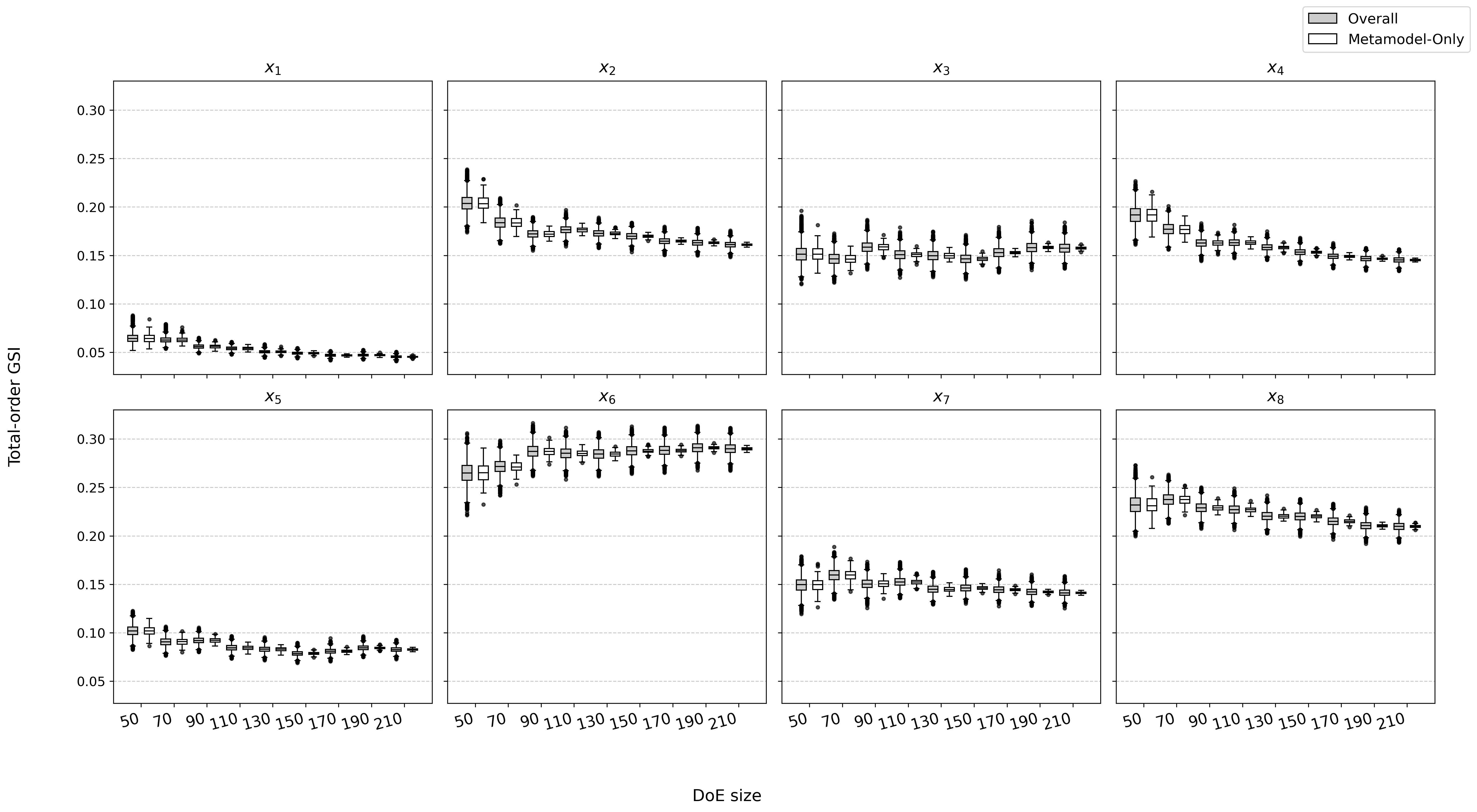}
    \caption{Campbell2D case: total GSI of all input variables in function of DoE size $\nDoE$, for $\nPF=5000$, $\nBoot=50$ and $\NumberGPite=200$.}
\label{fig:total_GSI_campbell_DOE}
\end{figure}

\begin{figure}[h]
    \centering    \includegraphics[width=0.8\linewidth]{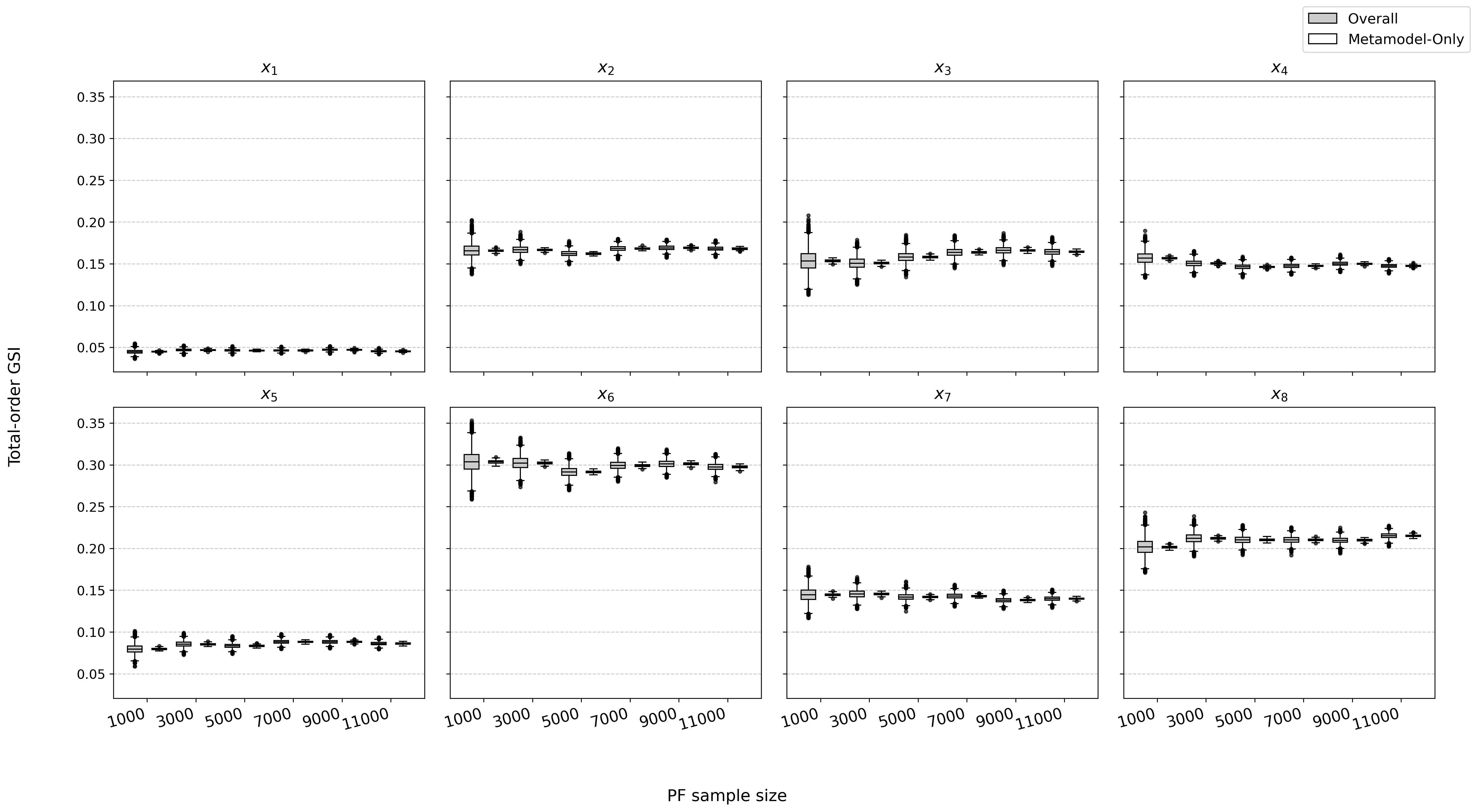}
    \caption{Campbell2D case: total GSI of all input variables in function of PF sample size $\nPF$, for $\nDoE = 200$, $\nBoot=50$ and $\NumberGPite=200$.}
\label{fig:total_GSI_campbell_PF}
\end{figure}

\subsection{Dam-break application}

Considering the dam-break application, Figures \ref{fig:total_DOE_dambreak} and \ref{fig:total_PF_dambreak} show the results of total GSI for different DoE subset sizes and PF sample sizes, respectively. Figure \ref{fig:total_sobol_dambreak} shows the total Sobol' indices reprojected to the original output dimensions, i.e. over time.

\begin{figure}[h]
  \centering
  \begin{subfigure}[t]{0.48\textwidth}
    \centering
    \includegraphics[width=\linewidth]{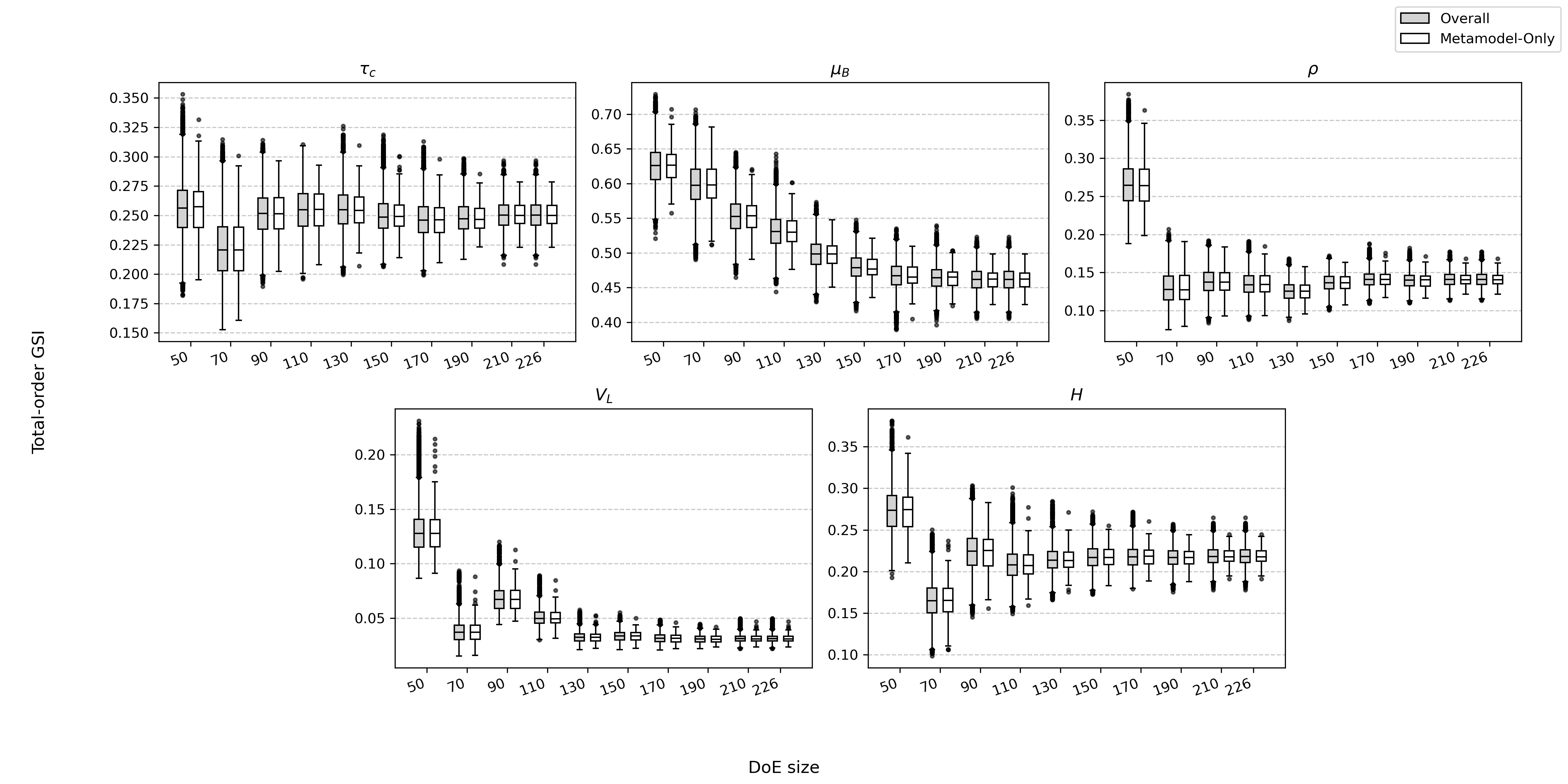}
    \caption{Total GSI results of all input variables in function of DoE size $\nDoE$, for $\nPF=5000$, $\nBoot=50$ and $\NumberGPite=200$.}
    \label{fig:total_DOE_dambreak}
  \end{subfigure}
  \hfill
  \begin{subfigure}[t]{0.48\textwidth}
    \centering
    \includegraphics[width=\linewidth]{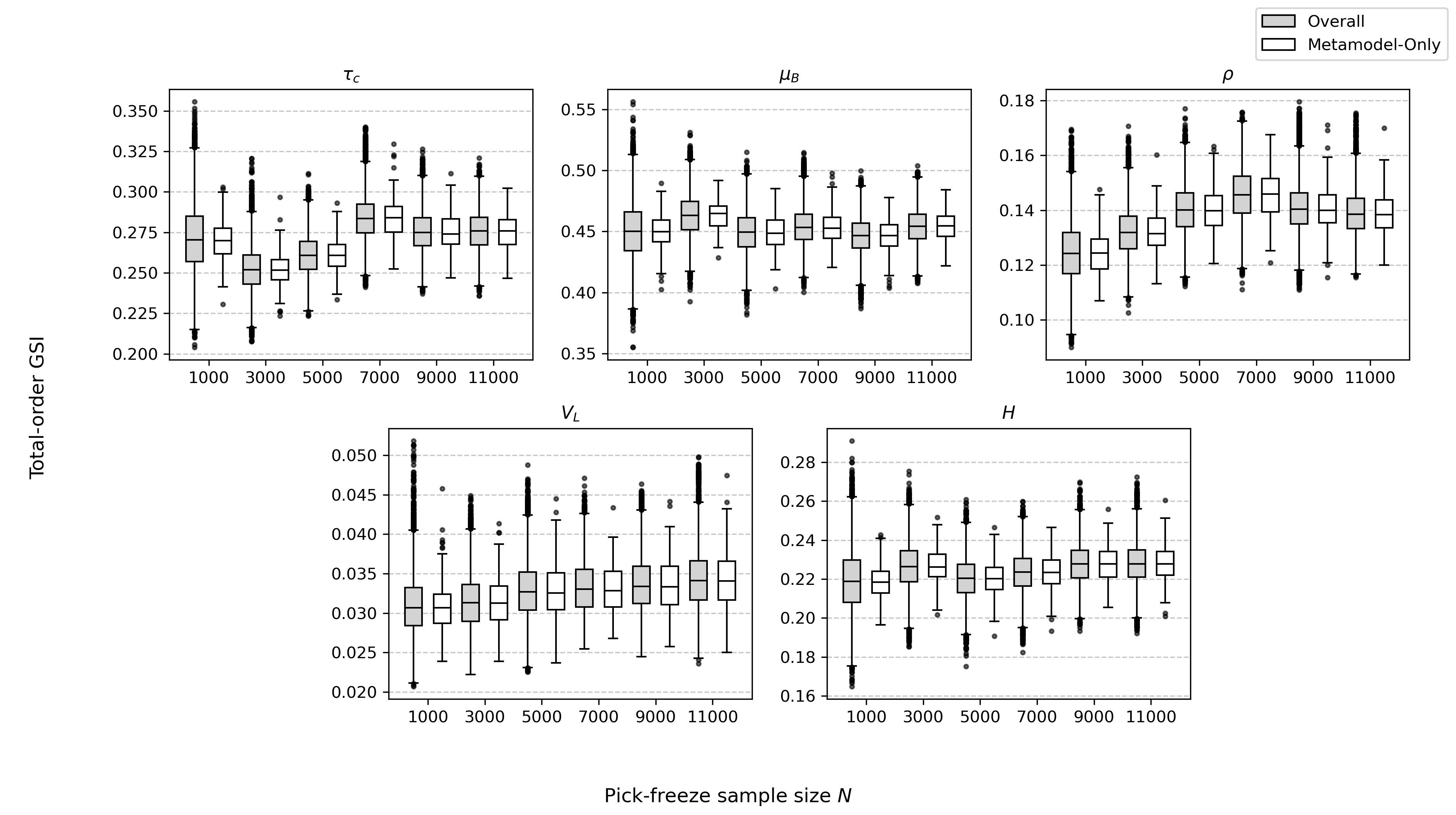}
    \caption{Total GSI results of all input variables in function of the PF sample size $\nPF$, for $\nDoE=226$, $\nBoot=50$ and $\NumberGPite=200$.}
    \label{fig:total_PF_dambreak}
  \end{subfigure}
  \caption{Boxplots of total GSI for different sizes of DoE and PF sample (Dam-break case).}
\end{figure}

\begin{figure}[h]
    \centering
    \includegraphics[width=0.8\linewidth]{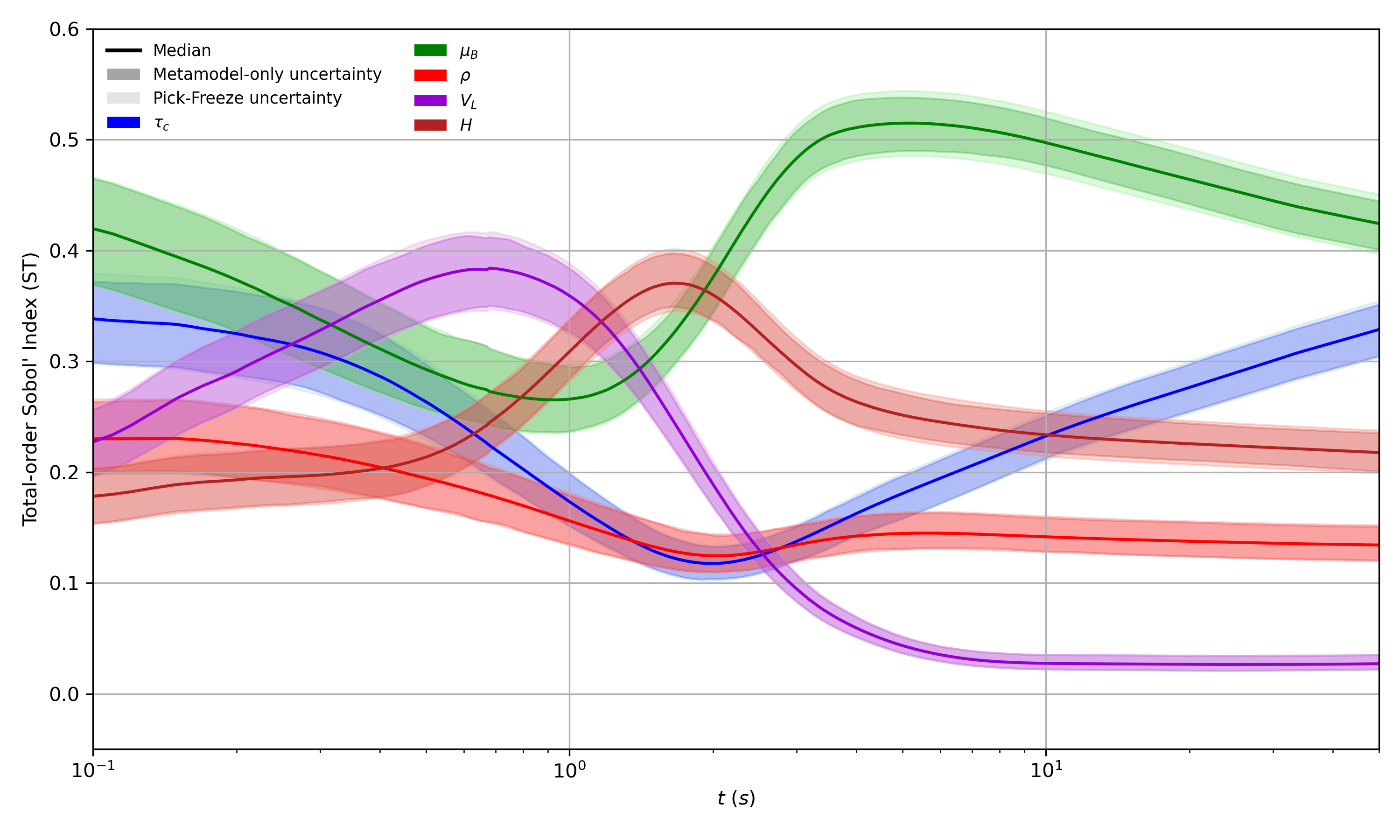}
    \caption{Dam-break case: total Sobol' indices over output dimensions (time), with DoE size $\nDoE = 226$, PF sample size of $\nPF=11000$, number of bootstrap repetitions $\nBoot=50$ and number of GP trajectories $\NumberGPite=200$.}
\label{fig:total_sobol_dambreak}
\end{figure}

\section{Validation of metamodels}
\label{app:q2}

This appendix shows the Nash-Sutcliffe efficiency (or $Q^2$ metric) for the Campbell2D function (Fig. \ref{fig:q2_campbell}) and for the dam-break application (Fig. \ref{fig:q2_dambreak}).

\begin{figure}[h]
    \centering
    \includegraphics[width=0.8\linewidth]{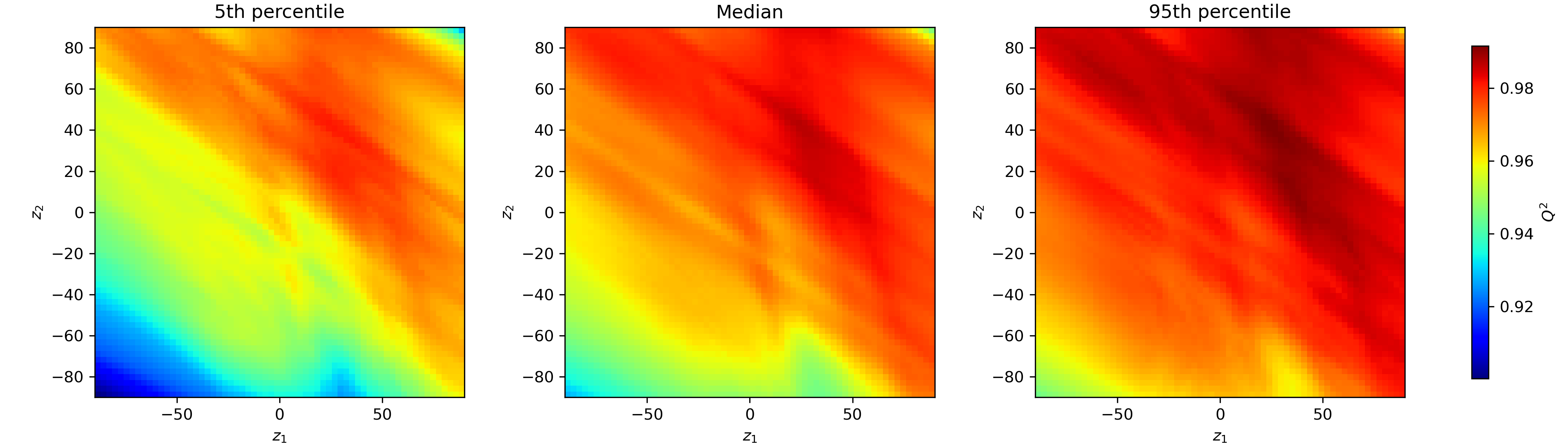}
    \caption{Campbell2D case: $Q^2$ maps and the 5th, 50th and 95th percentiles considering a set of $\NumberGPite=100$ random GP trajectories, $200$ training points and $50$ validation points.}
    \label{fig:q2_campbell}
\end{figure}

\begin{figure}[h]
    \centering
    \includegraphics[width=0.8\linewidth]{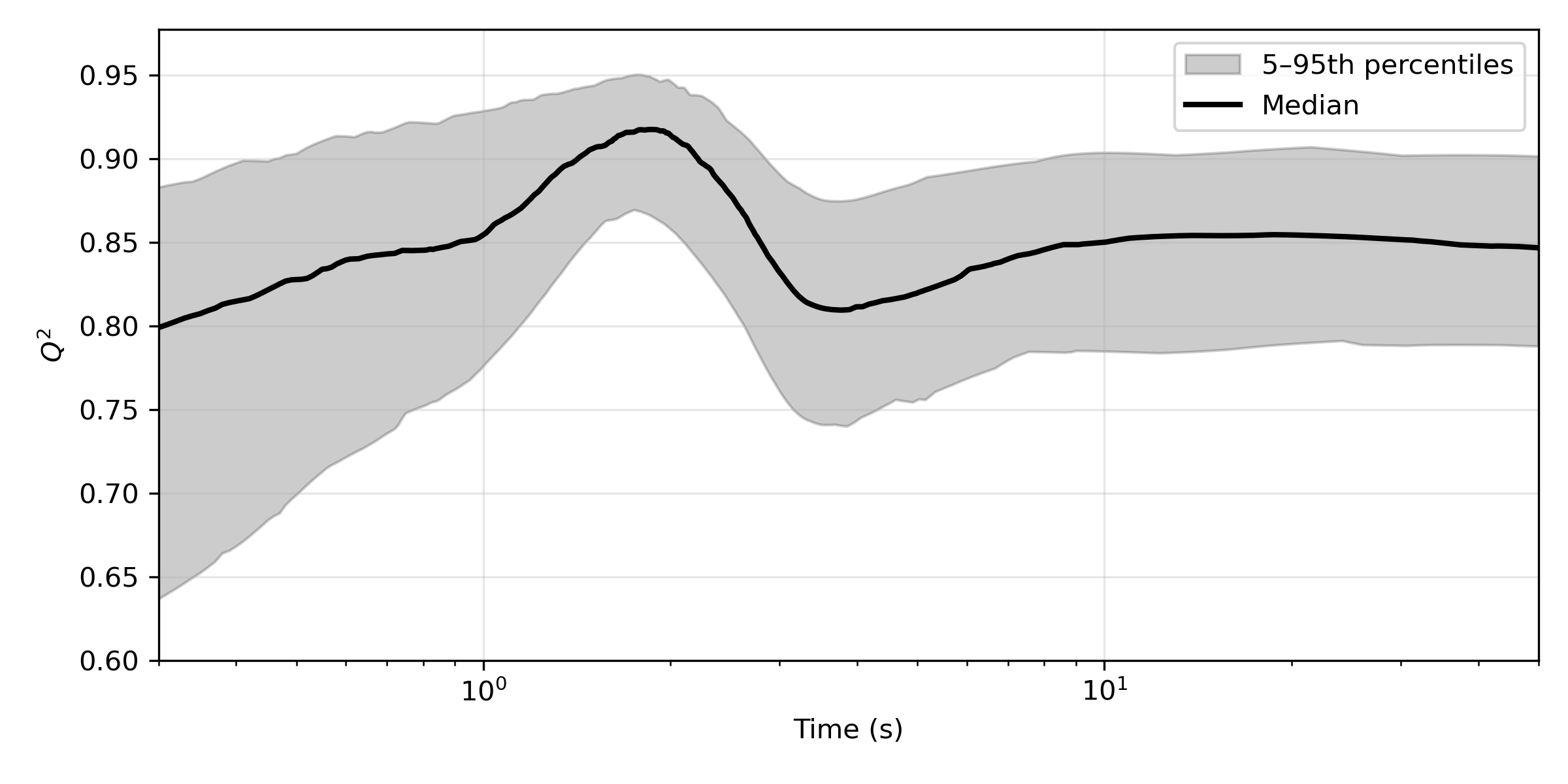}
    \caption{Dam-break case: $Q^2$ series over time and the 5th, 50th and 95th percentiles considering a set of $\NumberGPite=100$ random GP trajectories, $200$ training points and $26$ validation points}
    \label{fig:q2_dambreak}
\end{figure}

\end{document}